\newenvironment{subtheorem}[1]{%
  \def\subtheoremcounter{#1}%
  \refstepcounter{#1}%
  \protected@edef\theparentnumber{\csname the#1\endcsname}%
  \setcounter{parentnumber}{\value{#1}}%
  \setcounter{#1}{0}%
  \expandafter\def\csname the#1\endcsname{\theparentnumber\:\alph{#1}}%
  \ignorespaces
}{%
  \setcounter{\subtheoremcounter}{\value{parentnumber}}%
  \ignorespacesafterend
}
\newcounter{parentnumber}
\newtheorem{prop}{Proposition}
\newtheorem{Lem}{Lemma}
\newtheorem{Ass}{Assumption}
\newtheorem{RegAss}{Regularity Assumption}
\newcommand*\rot{\rotatebox{90}}
\newcommand{\ci}{\mbox{\protect $\: \perp \hspace{-2.3ex}\perp$ }}
\newcommand{\E}{{\mbox{E}}}
\newcommand{\Var}{{\mbox{Var}}}
\newcommand{\Cov}{{\mbox{Cov}}}
\newcommand{\N}{{\mbox{N}}}
\newcommand{\z}{{\bf z}}
\newcommand{\x}{{\bf x}}
\newcommand{\X}{{\bf X}}
\newcommand{\y}{{\bf y}}
\newcommand{\bgamma}{{\bf \gamma}}
\newcommand{\bbeta}{{\bf \beta}}
\newcommand{\blambda}{\boldsymbol \lambda}
\newcommand{\limes}{\lim \limits_{n\to\infty}}
\newcommand{\bias}{\mbox{bias}}
\title{Causal inference taking into account unobserved confounding}
\author
{\bf{Minna Genb\"ack}  \\
CEDAR, Ume\aa \, University, 901 87, Ume\aa, Sweden. \\
\textit{email:} minna.genback@umu.se\\
\\
\bf{and}\\
\\
\bf{Xavier de Luna} \\
Department of Statistics, USBE, Ume\aa \, University, 901 87, Ume\aa, Sweden.\\
\textit{email:} xavier.de.luna@umu.se}
\begin{document}
\date{}
\maketitle
\begin{abstract}
\noindent Causal inference with observational data can be performed under an assumption of no unobserved confounders (unconfoundedness assumption). There is, however, seldom clear subject-matter or empirical evidence for such an
assumption. We therefore develop uncertainty intervals for average causal effects based on outcome regression estimators and doubly robust estimators,
which provide inference taking into account both sampling variability and uncertainty due to unobserved confounders. In contrast with sampling variation, uncertainty due unobserved confounding does not decrease with increasing sample size.
The intervals introduced
are obtained by deriving the bias of the estimators due to unobserved confounders. We are thus also able to contrast the size of the bias due to violation of the unconfoundedness assumption, with bias due to misspecification of the models used to explain potential outcomes. 
This is illustrated through numerical experiments where bias due to moderate unobserved confounding dominates misspecification bias for typical situations in terms of sample size and modeling assumptions. We also study the empirical coverage of the uncertainty intervals introduced and apply the results to a study of the effect of regular food intake on health. An R-package implementing the inference proposed is available.
\end{abstract}

%

\textit{Keywords:} Average treatment effect; double robust; ignorability assumption; regular food intake; sensitivity analysis; uncertainty intervals.



%

\section{Introduction}
In observational studies, a causal effect of a treatment can be identified given the assumption that all variables confounding the effect on the outcome of interest are observed. This unconfoundedness assumption (also called ignorability of the treatment assignment mechanism) is regarded as the Achilles heel of non-experimental studies (\citealp{Liu:2013}) and is not testable without further information (e.g., \citealp{Luna:2006,LunaJ:2014}). 
We
therefore develop in this paper uncertainty intervals (\citealp{Vansteelandt:2006}) for average causal effects based on outcome regression estimators and doubly robust estimators,
which provide inference taking into account sampling variability and uncertainty due to unobserved confounders.
The intervals
are obtained by deriving the bias of the estimators due to unobserved confounders as a function of
a parameter $\rho$ (called bias parameter in the sequel) quantifying the amount of unobserved confounding. Using the bias expressions, we deduce bounds on the average causal effects (an identification set in contrast to point identification available under unconfoundedness). Combining these bounds with sampling variability yields uncertainty intervals that have the property to cover the parameter of interest with higher probability than an a priori chosen level (say 95\%). 
The bounds obtained are useful when information on the parameter $\rho$ is available since they can then be made tighter, in contrast with worst case scenario bounds (e.g., \citealp{M:03}, \citealp{HM:06}). 

The approach taken here is directly related to the quickly expanding literature on methods to perform a sensitivity analysis to the  unconfoundedness assumption (\citealp{R:10}, Chap. 14). And, indeed, the uncertainty intervals proposed may be used to perform such a sensitivity analysis whereby the maximum value of the bias parameter is presented for which the uncertainty interval covers zero (no causal effect). Among existing methods to perform sensitivity analyses, many are based on specifying parametric models on how a potential confounder affects the outcome and treatment assignment given the observed covariates, thereby introducing bias parameters (one for the effect of the confounder on the observed outcome and the other for the effect of the confounder on the treatment). Then, typically, using some distributional assumptions for the hypothetical unobserved confounder, the latter is integrated out in order to obtain the bias of an estimator as function of the bias parameters;  see, e.g., \cite{R:10,Lin:1998,Robins:2000,Rosenbaum,I:03,VanderWeele:2011} using a frequentist approach, and \cite{Greenland:2005}; \cite{Luna:2014} using a Bayesian framework. The confounder and outcome are often assumed binary, but some approaches allow for a continuous confounder and/or outcome (e.g., \citealp{VanderWeele:2011}). A directly related literature deals with sensitivity analyses to departures from the ignorability assumption of a missing outcome data mechanism (missing at random assumption); see, e.g., \cite{Copas:01,CE:05,SDR:03, DH:08}. In fact, by using the potential outcome framework (\citealp{Rubin:1974}), the estimation of a causal effect can be cast into a problem of missing outcome (unobserved potential outcomes) and sensitivity analyses for the missing at random assumption can readily be used to study deviations from the unconfoundedness assumption. Alternative approaches to parametrising the relation between a potential unobserved confounder and the outcome and treatment, is to define the bias due to non-ignorability (treatment assignment/missingness mechanism) as the bias parameter, and, e.g., put a prior on this bias within a Bayesian framework (\citealp{DH:08, Josefsson:2016}). Finally, an approach we find appealing from an interpretation point of view is to consider as the bias parameter, the correlation (induced by unobserved confounders) between the treatment assignment and the potential outcomes given the observed covariates; see \cite{CopasLi} and \cite{Genback:2015} within a missing outcome context, and
\cite{Imai:2010} within a parametric mediation analysis context. This approach has the advantage of introducing only one bias parameter for each missingness mechanism.

In this paper, we build upon the latter alternative to perform inference on a causal parameter that takes into account uncertainty due to unobserved confounding and sampling variability. When estimating the average causal effect, two data missingness mechanisms must be considered (one for the outcome under treatment and one for the outcome when no treatment is assigned) implying the need for two ignorability assumptions for point-identification, i.e. two bias parameters. On the other hand, if the interest lies solely in an average treatment effect on the treated (or the non-treated), then only one missingness mechanism has to be dealt with, thereby only one bias parameter. We obtain bounds on the causal effect of interest by deducing the bias of the estimators as function of the bias parameter(s). Thus, we are also able to contribute by contrasting the size of the bias of the outcome regression estimator due to, i) violation of the unconfoundedness assumption, and to ii) the misspecification of the models used to explain outcome. Indeed three types of uncertainty can be distinguished: sampling variation, model misspecification and unobserved confounding. Sampling variation decreases as sample size increases. Model misspecification bias may be tackled with double robust estimation, and in any case this bias can in principle be made arbitrarily small with larger samples, e.g. under sparsity assumptions, by increasing the flexibility of the models used. On the other hand, bias due to unobserved confounding, does not disappear with increasing sample size as long as unobserved confounders are omitted, and is therefore essential to take into account in observational studies. 

The paper is organized as follows. First, a framework for deducing bounds based on a parametrised model is introduced in Section 2. In Section 3 we focus on outcome regression and double robust estimators of average causal effects. We deduce their bias under confounding and show that confounding bias and model misspecification bias are separable. From confounding bias expressions we obtain bounds, and their corresponding uncertainty intervals for the parameter of interest. The R-package \texttt{ui} implements the methods proposed (available at \texttt{http://stat4reg.se/software}).  In Section 4 simulated experiments are conducted to study the relative size of the biases due to confounding and model misspecification, as well to investigate empirical coverage of the uncertainty intervals proposed. In Section 5 we perform a sensitivity analysis in a real data example. The paper is concluded in Section 6.

\section{Identification and sampling variation}

\subsection{Model for point identification}

Let $y^1$ and $y^0$ be two potential outcomes, where $y^1$ is the outcome when treatment $1$ is assigned ($z=1$), and $y^0$ is the outcome when treatment $0$ is assigned ($z=0$). The two potential outcomes are defined for each individual in the study although only one is observed ($y^1$ is observed when $z=1$, and $y^0$ is observed when $z=0$).  We further assume that a set of covariates $x$ is observed for all individuals. To allow for a more compact notation in the following sections we let the first element of $x$ represent the intercept. The evaluation of a treatment effect on the outcome may be done by considering average effects. In this paper we focus on both $E(y^1-y^0)=\tau$, the average causal effect, and $E(y^1-y^0\mid z=1)=\tau^1$, the average causal effect on the treated.

Without loss of generality, let us write
\begin{equation}\label{outcome.mod}
	y^0=f^0(x)+\varepsilon^0,\ \ \ \ \ y^1=f^1(x)+\varepsilon^1, 
	\end{equation}
where $f^j(x)$ $j=0,1$, are functions of $x$ and $E(\varepsilon^0\mid x)=E(\varepsilon^1\mid x)=0$. 
Let further
\begin{equation}\label{assignment.mod}
z^*=g(x)+\eta,\ \mbox{and}\ \ \  z={\bf I}(z^*>0),
\end{equation}
where $\bf I(\cdot)$ is an indicator function, $z^*$ is not observed, $g(x)$ is a function of $x$ and $E(\eta\mid x)=0$.

We can now give sufficient conditions for point identification of $\tau^1$ and $\tau$.
\begin{subtheorem}{Ass}
\label{Ass1}
\begin{Ass}
\label{Ass1a}
 $\varepsilon^0\ci \eta\mid x$.
\end{Ass}
\begin{Ass}
\label{Ass1b}
$\Pr(z=0\mid x)>0, \forall x \in \mathcal{X}$, where $\mathcal{X}$ is the support of $x.$
\end{Ass}
\end{subtheorem}

\begin{subtheorem}{Ass}
\label{Ass2}
\begin{Ass}
\label{Ass2a}
$\varepsilon^1\ci \eta\mid x$.
\end{Ass}
\begin{Ass}
\label{Ass2b}
$\Pr(z=1\mid x)>0, \forall x \in \mathcal{X}$, where $\mathcal{X}$ is the support of $x.$
\end{Ass}
\end{subtheorem}

Assumption \ref{Ass1a} and \ref{Ass2a} are often called unconfoundedness or ignorability assumptions. We have that $\tau^1$ and $\tau$ are point identified under Assumption \ref{Ass1} and under Assumptions \ref{Ass1} and \ref{Ass2} respectively (\citealp{RR:83}).


\subsection{Bias parameters and uncertainty intervals}
\label{Identif.sec}
The ignorability assumptions of the treatment assignment mechanism cannot be tested with the observed data unless extra information is available, e.g, instrumental variables; see \cite{LunaJ:2014}. Thus, unless Assumptions \ref{Ass1a} and \ref{Ass2a} are true by design of the study, uncertainty about their validity should be taken into account in the inference.
For this purpose, it is useful to parametrise deviations from Assumptions \ref{Ass1a} and \ref{Ass2a} The following parametrization models realistic deviations and is easy to communicate to potential users.  

\begin{Ass}
\label{Ass3}
Consider model (\ref{outcome.mod}-\ref{assignment.mod}) with $g(x)= g(x; \gamma)$, for an unknown parameter $\gamma$, and, for $j=0, 1$, let $\sigma_j^2=\mbox{Var}(y^j\mid x)<\infty$, $\eta \sim \mbox{N}(0, 1)$, $\varepsilon^j= \rho_j \sigma_j \eta + \xi_j$, $E(\xi_j)=0$, $\Var(\xi_j)=\sigma_j\sqrt{1-\rho^2}$ and $\xi_j \ci \eta$. 
\end{Ass}
Here we have introduced the bias parameters $\rho_0=Corr(\varepsilon^0,\eta)$ and $\rho_1=Corr(\varepsilon^1,\eta)$. 
This model is such that Assumption \ref{Ass1a} holds when $\rho_0=0$ and not otherwise, and Assumption \ref{Ass2a} holds when $\rho_1=0$ and not otherwise. Hence, these parameters describe departures from ignorability of the treatment assignment mechanism. We call them bias parameters since they tune the bias that will result from assuming ignorability. 
The normality of $\eta$ corresponds to a choice of link function for (\ref{assignment.mod}) which is convenient mathematically in the sequel, but is not otherwise essential in the model. Note, morevoer, that normality may be relaxed to a more general class of distributions (\citealp{Genback:2015}, Sec. 3.2).

If we have unmeasured confounders, one way to interpret $\rho_j$ is to rewrite the error terms $\varepsilon^0$, $\varepsilon^1$ and $\eta$ from equations (\ref{outcome.mod}) and (\ref{assignment.mod}) as the sum of the error that can and cannot be explained by the unmeasured confounder. For instance, if we believe that unmeasured confounder(s) explain $a\cdot100\, \%$ of the variation in $\eta$ and $b\cdot100\, \%$ of the variation in $\varepsilon^j$, and that the unmeasured confounders affect treatment assignment negatively and $y^j$ positively, then $\rho_j=(-\sqrt{a})(+\sqrt{b})$. 

The approach proposed here is to deduce an identification interval for the parameter of interest $\tau$ by using an estimator $\hat\tau$ which is unbiased for $\tau$ under unconfoundedness ($\rho_0=\rho_1=0$). Then, the bias of the estimator is computed as a function of the bias parameter $b(\rho_0,\rho_1;\theta)=E(\hat\tau)-\tau$, where $\theta$ is a nuisance parameter vector (containing $f^0$, $f^1$, $\sigma$ and $\gamma$). Finally, this bias expression together with
out-of-data information (if any) on $\rho_0$ and/or $\rho_1$ in the form of an interval $\rho_j \in [\rho_j^L ,\rho_j^U]$ yields an identification interval for $\tau$: 
\begin{equation}\label{ident.set}
	\{\tau: \tau=E_0(\hat\tau)-b(\rho_0,\rho_1;\theta), \rho_0\in [\rho_0^L,\rho_0^U], \rho_1\in [\rho_1^L,\rho_1^U], \theta = \theta_0 \},
\end{equation}
where $\theta_0$ is the true value of $\theta$, and $E_0$ is the expectation taken over the observed data law, i.e. corresponding to the true but unknown values for $\rho_0$ and $\rho_1$. Note that $\rho_j \in [-1, 1]$ is the no out-of-data information case. In some applications, however, one may have out-of-data information, for instance that the treatment assignment is not negatively correlated with the outcome, $\rho_j \in [0,1 -\delta]$, for some $\delta>0$. Another instance arises when a rich and relevant set of covariates $x$ is available, in which case one may believe that $\rho_j \in [-\delta,\delta]$ for $\delta\geq 0$ small.

%

In situations where a consistent and asymptotically normal estimator of $E(\hat\tau)-b(\rho_0,\rho_1;\theta)$ is available, denoted $\hat\tau-\hat b(\rho_0,\rho_1)$, with corresponding standard errors, $s.e.(\hat\tau-\hat b(\rho_0,\rho_1))$, an uncertainty interval containing $\tau$ with probability at least $1-\alpha$ (\citealp{Tanja}) is given by:
\begin{equation}
UI(\tau; [\rho_0^L,\rho_0^U], [\rho_1^L,\rho_1^U],\alpha)=\bigcup_{\rho_0 \in [\rho_0^L,\rho_0^U],\rho_1\in [\rho_1^L,\rho_1^U]} CI(\tau; \rho_0,\rho_1,\alpha),
\label{UI.eqn}
\end{equation}
where
\begin{equation*}
CI(\tau; \rho_0,\rho_1,\alpha)=\left(  \hat\tau-\hat b(\rho_0,\rho_1) - c_{\frac{\alpha}{2}}s.e.(\hat\tau-\hat b(\rho_0,\rho_1)), \hat\tau-\hat b(\rho_0,\rho_1) + c_{\frac{\alpha}{2}}s.e.(\hat\tau-\hat b(\rho_0,\rho_1)) \right)
\label{CI.eqn}
\end{equation*}
and $c_{\frac{\alpha}{2}}$ is the $(1-\alpha/2)100 \%$ percentile of the standard normal distribution.

\section[Identification intervals associated to outcome regression and doubly robust estimators]{Outcome regression and doubly robust estimators: bias and inference} 
\label{Bias.sec}
 We now consider two families of estimators and apply the approach described above to deduce their bias and the resulting uncertainty intervals. We will use the following assumption of correctly specified regression models.

\begin{subtheorem}{Ass}
\label{Ass4}
\begin{Ass}
$f^0(x)=\beta^{0'}x$,
\label{Ass4a}
\end{Ass}
\begin{Ass}
$f^1(x)=\beta^{1'}x$,
\label{Ass4b}
\end{Ass}
\end{subtheorem}
where $\beta^0$ and $\beta^1$ are parameter vectors, and the first element of $x$ is 1. Assumption 4 can be made very general by replacing $\beta' x$ by $\tilde \beta' w$ where $w$ includes bases functions of the space spanned by $x$, e.g. cubic splines. 

\subsection{Estimators of average causal effects}
\label{Estimators}

 Let us assume that we have a random sample of size $n$ of which $n_1$ are treated and $n_0$ are controls (not treated), and let $\mathcal{I}_1$ be the indexes for the treated and $\mathcal{I}_0$ be the indexes for the controls. We denote $\hat\beta^j_{OLS}=(\X_j'\X_j)^{-1}\X_j' \y_j$, for $j=0,1$, where $\X_j$ is a matrix of size $n_j \times (p+1)$ containing the elements $\{x_i' ; i \in \mathcal{I}_j\} $, $p$ is the number of covariates and $\y_j$ is a vector with elements $\{y_i ; i \in \mathcal{I}_j\}$. We consider the following outcome regression estimators for the average causal effect  $\tau$ and average causal effect on the treated $\tau^1$ (e.g., \citealp{Tan:2007}):
 \begin{equation}
\hat\tau^1_{OR} = \frac{1}{n_1}\sum \limits_{i =1}^n z_i\left(y_i-\hat\beta^{0'}_{OLS}x_i\right),
\end{equation}
\begin{equation}\hat\tau_{OR}=\frac{1}{n}\sum_{i=1}^{n}\left( \hat\beta^{1'}_{OLS}x_i-\hat\beta^{0'}_{OLS}x_i\right).
\end{equation}
The outcome regression estimator is an unbiased estimate of $\tau^1$ (or $\tau$) under Assumption \ref{Ass1} (and \ref{Ass2}), and Assumption \ref{Ass4a} (and b.). The doubly robust estimator consists of the outcome regression estimators above with a correction term for the potential misspecification of $f^j(x)$. For $\tau^1$ and $\tau$, the doubly robust estimators are (e.g., \citealp{SRR:99}, \citealp{Lunceford:2004} and \citealp{RF:13}):
\begin{eqnarray}\label{DR1.eq}
\hat{\tau}^1_{DR}&=&\hat\tau^1_{OR}-\frac{1}{n_1}\sum \limits_{i =1}^n(1-z_i)\frac{y_i-\hat\beta^{0'}_{OLS}x_i}{1-\hat p(x_i)} ,
\end{eqnarray}
\begin{equation}\label{DR.eq}
\hat{\tau}_{DR}=\hat \tau_{OR}
+\frac{1}{n}\sum  \limits_{i =1}^n z_i \frac{y_i-\hat\beta^{1'}_{OLS}x_i}{\hat{p}(x_i)}
-\frac{1}{n}\sum  \limits_{i =1}^n (1-z_i)\frac{y_i-\hat\beta^{0'}_{OLS}x_i}{1-\hat{p}(x_i)},
\end{equation}
where $\hat{p}(x_i)$ is an estimate of the propensity score $p(x_i)=\Pr(z=1\mid x_i)$. The doubly robust estimators are unbiased under Assumption \ref{Ass1} (and \ref{Ass2}), and Assumption \ref{Ass4a} (and b.) and/or a correctly specified propensity score model, see below.

\subsection{Bias expressions}
\label{Bias.section}

For the sake of simplicity we denote the (total) bias of an estimator $\hat \tau$ by bias$_T(\hat \tau)=b(\hat \tau, \rho_0,\rho_1;\theta)$. We investigate two sources of bias, the bias due to model misspecification (Assumption \ref{Ass4} not fulfilled), bias$_M$, and the bias due to unobserved confounding (non-ignorability of the treatment assignment mechanism, Assumption \ref{Ass3}), bias$_C$, as summarized in Table \ref{tab:bias.overview}. All proofs are given in  Appendix B.

\begin{prop} \emph{Bias of OR estimators under correctly specified models.}\\ 
Under Assumptions \ref{Ass1b}, \ref{Ass3}, \ref{Ass4a} and Regularity Assumption \ref{RegAss1} in Appendix A,
	\begin{align*}
	\limes \emph{bias}_T (\hat \tau_{OR}^1)&=\limes\emph{bias}_C (\hat \tau_{OR}^1) =\\
	&=\rho_0\sigma_0 \left(E(\lambda_1(g(x; \gamma))\mid z=1)+ E((\X_0'\X_0)^{-1}\X_0' \boldsymbol \lambda_0) E(x\mid z=1)\right),
	\end{align*}
 and under Assumptions \ref{Ass1b}, \ref{Ass2b}, \ref{Ass3} and \ref{Ass4},
		\begin{align*}
 \emph{bias}_T (\hat \tau_{OR})= \emph{bias}_C (\hat \tau_{OR})= \frac{1}{n}\E\left[ {\bf1}_n \X  \left(\rho_1\sigma_1(\X_1'\X_1)^{-1}\X_1' \boldsymbol\lambda_1 +\rho_0\sigma_0(\X_0'\X_0)^{-1}\X_0' \boldsymbol 	\lambda_0 \right)\right],
 \end{align*}
where ${\bf1}_n$ is a vector with all elements 1 of length $n$, and $\boldsymbol \lambda_j$ is a vector of length $n_j$ containing the elements $\{\lambda_j(g(x_i; \gamma)) ; i \in \mathcal{I}_j\} $, $\lambda_j$ is the inverse Mill's ratio $\lambda_0(g(x_i; \gamma))=\frac{\phi(g(x_i; \gamma))}{1-\Phi(g(x_i; \gamma))}$ and $\lambda_1(g(x_i; \gamma))=\frac{\phi(g(x_i; \gamma))}{\Phi(g(x_i; \gamma))}$,  and $\phi$ and $\Phi$ are the normal pdf and cdf. 
\label{Prop.OR1}
\end{prop}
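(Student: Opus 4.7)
The plan is to obtain both bias expressions from a single building block: the conditional mean of the potential-outcome errors given the observed treatment. Under Assumption \ref{Ass3}, $\varepsilon^j=\rho_j\sigma_j\eta+\xi_j$ with $\xi_j\ci\eta$ and $E(\xi_j)=0$, so $E(\varepsilon^j\mid\eta,x)=\rho_j\sigma_j\eta$. Since $z=\mathbf{I}(\eta>-g(x;\gamma))$ and $\eta\mid x\sim\mathrm{N}(0,1)$, standard truncated-normal moments give
\begin{equation*}
E(\varepsilon^j\mid z=1,x)=\rho_j\sigma_j\lambda_1(g(x;\gamma)),\qquad E(\varepsilon^j\mid z=0,x)=-\rho_j\sigma_j\lambda_0(g(x;\gamma)).
\end{equation*}
With Assumption \ref{Ass4}, this allows us to evaluate the conditional expectation of an OLS fit: conditional on $\X_j$, $\hat\beta^j_{OLS}-\beta^j=(\X_j'\X_j)^{-1}\X_j'\bvarepsilon_j$, so $E(\hat\beta^1_{OLS}\mid\X_1)-\beta^1=\rho_1\sigma_1(\X_1'\X_1)^{-1}\X_1'\boldsymbol\lambda_1$ and $E(\hat\beta^0_{OLS}\mid\X_0)-\beta^0=-\rho_0\sigma_0(\X_0'\X_0)^{-1}\X_0'\boldsymbol\lambda_0$.

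For $\hat\tau_{OR}$, I would rewrite
\begin{equation*}
\hat\tau_{OR}=\frac{1}{n}\mathbf{1}_n'\X(\hat\beta^1_{OLS}-\hat\beta^0_{OLS}),
\end{equation*}
and take the unconditional expectation by iterating with respect to $\X,\X_0,\X_1$. The $\beta^j$ components yield $E(x')(\beta^1-\beta^0)=\tau$ (so there is no misspecification bias under Assumption \ref{Ass4}), and the remaining terms assemble immediately into the closed-form expression stated in the proposition, with no limit required.

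For $\hat\tau^1_{OR}$, I would decompose $\hat\tau^1_{OR}=\bar y_1-\hat\beta^{0\prime}_{OLS}\bar x_1$ and write $\tau^1=E(y^1\mid z=1)-E(y^0\mid z=1)$. The $y^1$ piece cancels in the limit because $\bar y_1\to E(y^1\mid z=1)$ by the LLN, while $E(y^0\mid z=1)=\beta^{0\prime}E(x\mid z=1)+\rho_0\sigma_0 E(\lambda_1(g(x;\gamma))\mid z=1)$ by the conditional-mean computation above. It remains to pass to the limit in the product $\hat\beta^{0\prime}_{OLS}\bar x_1$: normalizing as $\hat\beta^0_{OLS}=\beta^0+(n_0^{-1}\X_0'\X_0)^{-1}(n_0^{-1}\X_0'\bvarepsilon_0)$, the LLN applied conditionally on the control subsample and Slutsky's theorem give the deterministic limit of $\hat\beta^0_{OLS}$, and continuous mapping then delivers the limit of $\hat\beta^{0\prime}_{OLS}\bar x_1$. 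Subtracting produces the stated formula, once one recognizes that $\lim E[(\X_0'\X_0)^{-1}\X_0'\boldsymbol\lambda_0]=[E(xx'\mid z=0)]^{-1}E[x\lambda_0(g(x;\gamma))\mid z=0]$.

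The main obstacle is the $\hat\tau^1_{OR}$ case: unlike the ATE estimator, the ATT estimator is a product of sample averages taken over \emph{different} (random-sized, dependent) subsamples, so a clean finite-sample factorization of $E(\hat\beta^{0\prime}_{OLS}\bar x_1)$ is not available. This is precisely where Regularity Assumption \ref{RegAss1} in Appendix A is invoked, to guarantee the moment conditions needed for uniform integrability so that convergence in probability of $\hat\beta^0_{OLS}\bar x_1$ carries over to convergence of expectations and yields the stated limiting bias.
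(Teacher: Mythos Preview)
Your proposal is correct and follows essentially the same route as the paper: derive the conditional error means $E(\varepsilon^j\mid z,x)$ from Assumption~\ref{Ass3} (this is the paper's Lemma~1), plug into the OLS formula, and for $\hat\tau_{OR}$ take the finite-sample expectation directly while for $\hat\tau^1_{OR}$ pass to the limit because of the random denominator $n_1$. The only minor technical difference is in that last step: the paper packages the limit argument as Lemma~2, which handles $E\big(\tfrac{1}{n_1}\sum z_i f(x_i,y_i)\big)\to E(zf)/E(z)$ via a direct covariance bound plus dominated convergence (this is what Regularity Assumption~\ref{RegAss1a} is tailored to), rather than the Slutsky-plus-uniform-integrability route you sketch; both arrive at the same factorization $E(\hat\beta^{0}_{OLS})'E(x\mid z=1)$.
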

\noindent Let us further investigate model misspecification bias in combination with non-ignorability of treatment. 

\begin{prop} \emph{Bias of OR estimators under model misspecification.}\\
Under Assumptions \ref{Ass1b} ,\ref{Ass3} and Regularity Assumption \ref{RegAss1} in Appendix A,
	\begin{align*}
	\limes \emph{bias}_T(\hat \tau^1_{OR}) &=\limes( \emph{bias}_C(\hat \tau^1_{OR})+  \emph{bias}_M(\hat \tau^1_{OR}))=\\
	&= \emph{bias}_C(\hat \tau^1_{OR})+ \E\left[ f^0(x) |z=1\right] - E\left[(\X_0'\X_0)^{-1}\X_0' f^0(x) \right] \E(x|z=1), 
	\end{align*}
 and under Assumptions \ref{Ass1b}, \ref{Ass2b} and \ref{Ass3},
	\begin{align*}
	\emph{bias}_T(\hat \tau_{OR})&= \emph{bias}_C(\hat \tau_{OR})+ \emph{bias}_M(\hat \tau_{OR})=\\
	&=\emph{bias}_C(\hat \tau_{OR})+\E\left(z f^0(x) \right) - \frac{1}{n}\E\left[ {\bf1}_{n_1} \X_1 (\X_0'\X_0)^{-1}\X_0'  \boldsymbol f^0(x)_{|z=0} \right]\\
	&\qquad \qquad \quad \;\;\; - \E\left((1-z) f^1(x)\right) +\frac{1}{n}\E\left[ {\bf1}_{n_0} \X_0 (\X_1'\X_1)^{-1}\X_1'  \boldsymbol f^1(x)_{|z=1} \right],
	\end{align*}
	where $\boldsymbol f^j(x)_{|z=j}$ is a vector of length $n_j$ containing the elements $\{f^j(x_i) ; i \in \mathcal{I}_j\} $, $j=0,1$. 
\label{Prop.OR2}
\end{prop}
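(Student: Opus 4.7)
The plan is to repeat the structure of the proof of Proposition~\ref{Prop.OR1}, now allowing the outcome regressions $f^0,f^1$ to be unrestricted, and to show that the bias separates additively into the confounding bias already identified in Proposition~\ref{Prop.OR1} plus a new misspecification remainder driven only by $f^0,f^1$. The clean additive separation is possible precisely because the Assumption~\ref{Ass3} decomposition $\varepsilon^j=\rho_j\sigma_j\eta+\xi_j$ is additive with $\xi_j\ci\eta$, so the inverse Mill's ratio contribution (proportional to $\rho_j\sigma_j$) and the $f^j$ contribution factor through expectations without interacting.

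For $\hat\tau_{OR}$, I would rewrite the estimator as $\hat\tau_{OR}=n^{-1}{\bf 1}'_n\X(\hat\beta^1_{OLS}-\hat\beta^0_{OLS})$ and split each $\hat\beta^j_{OLS}=(\X_j'\X_j)^{-1}\X_j'\boldsymbol f^j(x)_{|z=j}+(\X_j'\X_j)^{-1}\X_j'\boldsymbol\varepsilon^j_{|z=j}$. Taking expectation using iterated conditioning on $(\X,z)$ and Assumption~\ref{Ass3} gives $E(\varepsilon^j_i\mid \X,z, z_i=j)=(-1)^{1-j}\rho_j\sigma_j\lambda_j(g(x_i;\gamma))$, the truncated-normal formula for $\eta$. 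The resulting terms proportional to $\rho_j\sigma_j$ reproduce $\mathrm{bias}_C(\hat\tau_{OR})$ of Proposition~\ref{Prop.OR1} verbatim. For the $f^j$ terms, I would decompose ${\bf 1}'_n\X={\bf 1}'_{n_1}\X_1+{\bf 1}'_{n_0}\X_0$ and apply the projection identity ${\bf 1}'_{n_j}\X_j(\X_j'\X_j)^{-1}\X_j'={\bf 1}'_{n_j}$, which holds because ${\bf 1}_{n_j}$ is the first column of $\X_j$ (intercept). This collapses the ``own-treatment'' terms to $E[zf^1(x)]$ and $E[(1-z)f^0(x)]$, while the ``cross'' terms give the two $n^{-1}E[{\bf 1}'_{n_{1-j}}\X_{1-j}(\X_j'\X_j)^{-1}\X_j'\boldsymbol f^j(x)_{|z=j}]$ pieces. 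Subtracting $\tau=E[f^1(x)]-E[f^0(x)]$ and using $E[f^j(x)]=E[zf^j(x)]+E[(1-z)f^j(x)]$ turns the diagonal terms into the $+E[zf^0(x)]-E[(1-z)f^1(x)]$ contribution stated in the proposition.

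For $\hat\tau^1_{OR}$, I would use $\hat\tau^1_{OR}=n_1^{-1}\sum_{i\in\mathcal{I}_1}y^1_i-\hat\beta^{0'}_{OLS}\bigl(n_1^{-1}\sum_{i\in\mathcal{I}_1}x_i\bigr)$ and pass to the probability limit, justified by Regularity Assumption~\ref{RegAss1}, so that the three stochastic factors separate into their respective limits. Standard OLS algebra combined with Assumption~\ref{Ass3} yields the probability limit $[E(xx'\mid z=0)]^{-1}\bigl(E[xf^0(x)\mid z=0]-\rho_0\sigma_0 E[x\lambda_0(g(x;\gamma))\mid z=0]\bigr)$ for $\hat\beta^0_{OLS}$. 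Subtracting $\tau^1=E(y^1\mid z=1)-E[f^0(x)\mid z=1]-\rho_0\sigma_0 E[\lambda_1(g(x;\gamma))\mid z=1]$ and regrouping, the $\rho_0\sigma_0$ pieces collapse to $\mathrm{bias}_C(\hat\tau^1_{OR})$ of Proposition~\ref{Prop.OR1}, while the remaining $f^0$ pieces give exactly $E[f^0(x)\mid z=1]-E[(\X_0'\X_0)^{-1}\X_0'\boldsymbol f^0(x)_{|z=0}]E(x\mid z=1)$, the claimed $\mathrm{bias}_M(\hat\tau^1_{OR})$.

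The only genuine obstacle is bookkeeping: keeping confounding and misspecification contributions cleanly separated after the projection identity collapses some of the terms, and ensuring that the iterated-expectation argument for $\hat\tau_{OR}$ uses the correct conditioning set so that $\boldsymbol f^j(x)_{|z=j}$ is measurable while $\boldsymbol\varepsilon^j_{|z=j}$ retains its truncated-normal conditional mean. No genuinely new technical device beyond those used in the proof of Proposition~\ref{Prop.OR1} is required; the $\varepsilon^j$ handling carries over verbatim, and the novel content is only the additional $f^j$ pieces surviving the projection step.
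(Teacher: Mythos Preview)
Your proposal is correct and mirrors the paper's own proof almost exactly: both condition on $(\X,z)$, split $\hat\beta^j_{OLS}$ into an $f^j$ piece and an $\varepsilon^j$ piece whose conditional mean is $(-1)^{1-j}\rho_j\sigma_j\lambda_j$ via Assumption~\ref{Ass3}, and then use the intercept projection identity ${\bf 1}'_{n_j}\X_j(\X_j'\X_j)^{-1}\X_j'={\bf 1}'_{n_j}$ to collapse the own-treatment $f^j$ terms. The only cosmetic difference is that for $\hat\tau^1_{OR}$ you phrase the limit via probability limits of the factors, whereas the paper works directly with expectations through Lemma~\ref{Lem2}; under Regularity Assumption~\ref{RegAss1} these coincide, so there is no substantive gap.
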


\begin{prop} \emph{Bias of DR estimators.}\\ 
Under Assumptions \ref{Ass1b} and \ref{Ass3} and Regularity Assumption \ref{RegAss1} and \ref{RegAss2} in Appendix A,
	\begin{align*}
	\limes \emph{bias}_T (\hat \tau^1_{DR}) =\limes\emph{bias}_C (\hat \tau^1_{DR}) = \rho_0\sigma_0 \frac{\E(\lambda_0(g(x; \gamma)))}{\Pr(z=1)}.
	\end{align*}	
Under Assumptions \ref{Ass1b}, \ref{Ass2b}, \ref{Ass3} and Regularity Assumption \ref{RegAss2} in Appendix A,
	\begin{align*}
	\emph{bias}_T (\hat \tau_{DR})=\emph{bias}_C (\hat \tau_{DR})=  \rho_1\sigma_1 \E(\lambda_1(g(x; \gamma)))+ \rho_0\sigma_0 \E(\lambda_0(g(x; \gamma))).
	\end{align*}
\label{Prop.DR}
\end{prop}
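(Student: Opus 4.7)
The plan is to take probability limits of $\hat\tau_{DR}$ and $\hat\tau^1_{DR}$ under Assumption \ref{Ass3} and the stated regularity conditions, and subtract $\tau$ and $\tau^1$ respectively. The core ingredient is a Heckman-style conditional-mean calculation: since $\eta\sim N(0,1)$ and $z=\mathbf{I}(g(x;\gamma)+\eta>0)$ under Assumption \ref{Ass3}, the propensity score is $p(x)=\Phi(g(x;\gamma))$ and the truncated-normal identities give $E(\eta\mid z=1,x)=\lambda_1(g(x;\gamma))$, $E(\eta\mid z=0,x)=-\lambda_0(g(x;\gamma))$. Because $\varepsilon^j=\rho_j\sigma_j\eta+\xi_j$ with $\xi_j\ci\eta$, this yields
\begin{equation*}
E(y\mid z=1,x) = f^1(x) + \rho_1\sigma_1\lambda_1(g(x;\gamma)), \qquad E(y\mid z=0,x) = f^0(x) - \rho_0\sigma_0\lambda_0(g(x;\gamma)).
\end{equation*}
I will also use that $\hat p(x)$ is consistent for $p(x)$ (a correctly specified probit), and that $\hat\beta^j_{OLS}\to_p \beta^{j*}$, the population OLS coefficient from regressing $y$ on $x$ on the subsample $z=j$.

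For $\hat\tau_{DR}$, I would apply the LLN termwise and use the tower property. Writing $g=g(x;\gamma)$, the two inverse-probability-weighted residual averages converge to $E[f^1(x)]+\rho_1\sigma_1 E[\lambda_1(g)] - \beta^{1*\prime}E(x)$ and $E[f^0(x)] - \rho_0\sigma_0 E[\lambda_0(g)] - \beta^{0*\prime}E(x)$, respectively, while $\hat\tau_{OR}\to_p (\beta^{1*}-\beta^{0*})'E(x)$. Adding these three limits, the $\beta^{j*\prime}E(x)$ contributions cancel and the result equals $\tau + \rho_1\sigma_1 E[\lambda_1(g)] + \rho_0\sigma_0 E[\lambda_0(g)]$, which is the claimed bias. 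This cancellation is exactly what eliminates the $(\X_j'\X_j)^{-1}\X_j'\boldsymbol\lambda_j$ adjustment term that appeared in Proposition \ref{Prop.OR1}.

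For $\hat\tau^1_{DR}$, the same calculation with $1/n_1$ scaling gives, with $\pi_1=\Pr(z=1)$,
\begin{equation*}
\hat\tau^1_{DR}\to_p E(y^1\mid z=1) - \beta^{0*\prime}E(x\mid z=1) - \tfrac{1}{\pi_1}\bigl(E[f^0(x)] - \rho_0\sigma_0 E[\lambda_0(g)] - \beta^{0*\prime}E(x)\bigr).
\end{equation*}
To conclude I would rewrite $\tau^1 = E(y^1\mid z=1) - E[f^0(x)\mid z=1] - \rho_0\sigma_0 E[\phi(g)]/\pi_1$, using $\Phi(g)\lambda_1(g)=\phi(g)$, and then force cancellation of the $\beta^{0*}$ and $f^0$ terms via two identities: the OLS-with-intercept identity $\beta^{0*\prime}E(x\mid z=0)=E(y\mid z=0)=E[f^0(x)\mid z=0]-\rho_0\sigma_0 E[\lambda_0(g)\mid z=0]$, and the companion Mills-ratio identity $E[\phi(g)]=(1-\pi_1)E[\lambda_0(g)\mid z=0]$. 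Substituting leaves only $\rho_0\sigma_0 E[\lambda_0(g)]/\pi_1$.

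The main obstacle is the bookkeeping for $\tau^1$: because $\hat\tau^1_{OR}$ is normalised by $1/n_1$ rather than $1/n$, the OLS probability limit $\beta^{0*}$ (which does \emph{not} equal $\beta^0$ under confounding) enters the expression through more than a single $\beta^{0*\prime}E(x)$, so it is not immediately obvious that the result is free of $\beta^{0*}$ and $f^0$. Only the two identities above complete the cancellation; once they are recognised the remaining algebra is routine.
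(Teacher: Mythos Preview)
Your argument is correct and, for $\hat\tau_{DR}$, essentially identical to the paper's: both compute the two IPW-residual averages via the tower property, obtain $E[f^j(x)]\pm\rho_j\sigma_j E[\lambda_j(g)]-\beta^{j*\prime}E(x)$, and watch the $\beta^{j*}$ terms cancel against the OR piece.

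For $\hat\tau^1_{DR}$ your route differs in organization. You pass to probability limits, separate $f^0$ from the confounding contribution, and then need \emph{two} identities --- the OLS-intercept identity $\beta^{0*\prime}E(x\mid z=0)=E[f^0(x)\mid z=0]-\rho_0\sigma_0 E[\lambda_0(g)\mid z=0]$ and the Mills-ratio identity $E[\phi(g)]=(1-\pi_1)E[\lambda_0(g)\mid z=0]$ --- to complete the cancellation. The paper instead keeps $\hat\beta^0_{OLS}$ inside the expectation, works with $y^0$ rather than $f^0$, and decomposes $E(y^0)=E(zy^0)+E((1-z)y^0)$; the first piece cancels $E(y^0\mid z=1)$ directly, and the second is dispatched in one stroke by the exact OLS normal-equation identity $\sum_i(1-z_i)(y_i-\hat\beta^0_{OLS}x_i)=0$ (valid because $x$ contains an intercept). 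That single identity packages both of your identities together, so the paper's bookkeeping is shorter and avoids the ``obstacle'' you flag. Either organization is valid; the paper's is more direct for the ATT case.
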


\begin{table}[ht]
\caption{The total bias of the outcome regression and double robust estimators decomposed into bias due to model misspecification (bias$_M$) and bias due to confounding (bias$_C$), see Proposition \ref{Prop.OR1}-\ref{Prop.DR} for details.}
\begin{center}
\begin{tabular}{cc| l| l}
\hline
	&\multicolumn{1}{c}{}&\multicolumn{2}{c}{Assumption 4} \\
	&\multicolumn{1}{c}{Estimator}&\multicolumn{1}{c}{fulfilled}&\multicolumn{1}{c}{not fulfilled}\\
	\hline\hline
\multirow{2}{*}{$\rho_j=0$}&OR& $\bias_T=0$&$\bias_T=\bias_M$\\
	&DR&$\bias_T=0$& $\bias_T=0$\\
	\hline
	\multirow{2}{*}{$\rho_j\neq0$}&OR &$\bias_T=\bias_C$&$\bias_T=\bias_C +\bias_M$ \\
	&DR&$\bias_T=\bias_C$ &$\bias_T=\bias_C$ \\
\hline
\end{tabular}
\end{center}
\label{tab:bias.overview}
\end{table}
Since the doubly robust estimator is the outcome regression estimator with a correction term, there is a link between Proposition \ref{Prop.OR2} and \ref{Prop.DR}:
 \begin{align*}\limes \mbox{bias}_C(\hat \tau^1_{DR}) - \mbox{bias}_C(\hat \tau^1_{OR}) =  \rho_0 \sigma_0  \E\left(\left.\lambda_0(g(x; \gamma)) -   (\X_0' \X_0)^{-1} \X_0' {\boldsymbol \lambda_0 } x \right| z=1 \right),\qquad
 \end{align*}
 and
 \begin{align*} \mbox{bias}_C(\hat \tau_{DR}) - \mbox{bias}_C(\hat \tau_{OR}) &=   \rho_0 \sigma_0  \E\left(\left.\lambda_0(g(x; \gamma)) -   (\X_0' \X_0)^{-1} \X_0' {\boldsymbol \lambda_0 } x \right| z=1 \right)E(z)  \;+\quad \\
& \rho_1 \sigma_1  \E\left(\left.\lambda_1(g(x; \gamma)) -   (\X_1' \X_1)^{-1} \X_1' {\boldsymbol \lambda_1 } x \right| z=0 \right)E(1-z).
 \end{align*}
Since $\lambda_j$ for $j=0,1$ is often close to linear in $x$ (\citealp{P:00}) the expectation of the difference between $\lambda_j(g(x; \gamma))$ and the linear projection of $\lambda_j(g(x; \gamma))$  is small. Hence the difference between the confounding bias of doubly robust and outcome regression estimators is also typically small.

%

\subsection{Uncertainty intervals}
From Proposition \ref{Prop.OR1}-\ref{Prop.DR} identification intervals and uncertainty intervals as defined in Section \ref{Identif.sec} can be derived for $\tau^1$ and $\tau$. The estimation of all elements of $\bias_C$ (Proposition \ref{Prop.OR1}-\ref{Prop.DR}) is straightforward with the exception of $\sigma_j$, $j=0,1$. A consistent estimator of $\sigma_j$ is given by:
\begin{equation}
\hat \sigma_j=\sqrt{\frac{\hat \sigma_{j, OLS}^2}{1+ (-1)^j \frac{\rho_j^2}{n-p} (g(\x_j; \gamma)^T \blambda_j-\blambda_j^T \X_j (\X_j^T \X_j)^{-1}\X_j^T \blambda_j )}},
\label{sigma}
\end {equation}
where ${\hat \sigma_{OLS,j}^2}$ is the residual sample variance from the OLS fit of the potential outcome $j$ (i.e. $\y^j$ against $\X_j$) and $g(\x_j; \gamma)$ is a vector of length $n_j$ containing the elements $\{g(x_i; \gamma)) ; i \in \mathcal{I}_j\} $; a proof of this result for $j=1$ is found in \citet[page 9]{Tanja}.

From Proposition \ref{Prop.OR1}-\ref{Prop.DR}, identification intervals, assuming $\rho \in [\rho^L, \rho^U]$, can be derived by replacing $\sigma_j$ with $\hat \sigma_j$ from (\ref{sigma}). For instance, from Proposition \ref{Prop.DR}, an estimated identification interval for $\tau^1$ is given by:
\begin{equation*}
\left\{  \tau^1 : \tau^1=\hat \tau^1_{DR} -  \rho_0 \hat\sigma_{0}\frac{\hat\E(\lambda_0(g(x; \gamma)))}{\widehat{\Pr}(z=1)}  , \rho_0 \in [\rho_0^L,\rho_0^U]\right\}.
\label{t1DR.IS}
\end{equation*}

Ignoring the sampling variability from $\widehat{\bias}_T$, and noting that $\hat\tau_{DR}^1$ is asymptotically normally distributed (\citealp{Tsiatis:06}), the lower and upper bound of the uncertainty interval, $UI(\tau_{DR}^1; [\rho_0^L,\rho_0^U],\alpha),$ are respectively (see (\ref{UI.eqn})):
\begin{equation}
\min \limits_{\rho_0 \in [\rho_0^L,\rho_0^U]} \left( \hat \tau^1_{DR} -  \rho_0 \hat\sigma_{0}\frac{\hat\E(\lambda_0(g(x; \gamma)))}{\widehat{\Pr}(z=1)} -c_{\frac{\alpha}{2}}s.e.(\hat \tau^1_{DR}) \right),
\label{UI.ex.l}
\end{equation}
and
\begin{equation}
\max \limits_{\rho_0 \in [\rho_0^L,\rho_0^U]} \left( \hat \tau^1_{DR} -  \rho_0 \hat\sigma_{0} \frac{\hat\E(\lambda_0(g(x; \gamma)))}{\widehat{\Pr}(z=1)} +c_{\frac{\alpha}{2}}s.e.(\hat \tau^1_{DR}) \right),
\label{UI.ex.u}
\end{equation}
where $c_{\frac{\alpha}{2}}$ is the $(1-\alpha/2)100 \%$ percentile of the standard normal distribution. Estimated uncertainty intervals for $\tau$ and the outcome regression estimators are obtained similarly. Standard errors for the outcome regression and doubly robust estimators are given in Appendix C. Note that ignoring the sampling variability from $\widehat{\bias}_T$ in (\ref{UI.ex.l}) and (\ref{UI.ex.u}) should have no serious consequences because it is of lower asymptotic order. This is confirmed by the simulation study in Section \ref{Sim.sec}.

\section{Simulation study}
\label{Sim.sec}
The purpose of the simulation experiments is to illustrate the relative sizes of the biases due to model misspecification and confounding, as well as to study the empirical coverages of the proposed uncertainty intervals. The data is generated using four different designs, linear or non-linear, using one or five covariates. For each design, we use two different treatment assignments with different amount of imbalance in the propensity scores between the treated and the non-treated. One with low imbalance (high overlap), L1$=0.18$ and $0.19$, and one with high imbalance (low overlap), L1$=0.37$ and $0.34$. L1 is the area which is not overlaid in a graph with two density histograms of the propensity scores for the treated and untreated, see \citet[equation (5)]{L1}. This measure varies with different bin-size. We use the default of the function {\tt hist} in {\tt R} statistical software on all the propensity scores (both from treated and untreated) to select bin-size. 
In all designs we have about $40\%$ treated, and use a linear model for the treatment assignment mechanism, i.e. $g(x; \gamma)= \gamma' x$ in (\ref{assignment.mod}). For all four designs we use a sample size of 250 and 500, with 10 000 replications and compute uncertainty intervals based on Proposition \ref{Prop.OR1} and \ref{Prop.DR}, i.e. using the outcome regression estimators adjusting for $\bias_C$ but not $\bias_M$ and using the doubly robust estimators adjusting for $\bias_T=\bias_C$. Finally, we let
$$\begin{pmatrix} \eta\\ \varepsilon^0 \\ \varepsilon^1  \end{pmatrix} \sim MVN\left(\begin{pmatrix}0\\0\\0  \end{pmatrix},\begin{pmatrix}
 1 & \rho_0& \rho_1 \\ 
 \rho_0& 1 & \rho_0 \rho_1 \\ 
 \rho_1 & \rho_0 \rho_1 & 1 \\ 
  \end{pmatrix}\right),$$ 
  where $\rho_0=\rho_1 =$ 0, 0.05, 0.1, 0.3, and 0.5. 

In Design A and B we use $\gamma' = [-0.27, 0.3]$ and $\gamma' = [-0.3, 0.65] $, to generate low and high L1 respectively, and $x \sim \N (0,1)$. 
In Design A (linear) we use the outcome equations: $f^0(x)= 0.5 + 0.5x $, and $f^1(x)= 2.5 + 1.5x $. In Design B (non-linear) we use $f^0(x)=h^0(x)$ and $f^1(x)=h^1(x)$, where:

\vspace{8mm}
\hspace{-5mm}
\begin{tabular}{@{}llll@{}}
 \multirow{4}{*}{$h^0(x)=$} \rdelim\{{4}{3mm} &$0.15-x-0.4x^2$ & $x<-1.5$,&  \multirow{9}{*}{\parbox[l]{1em}{\includegraphics[width=2in]{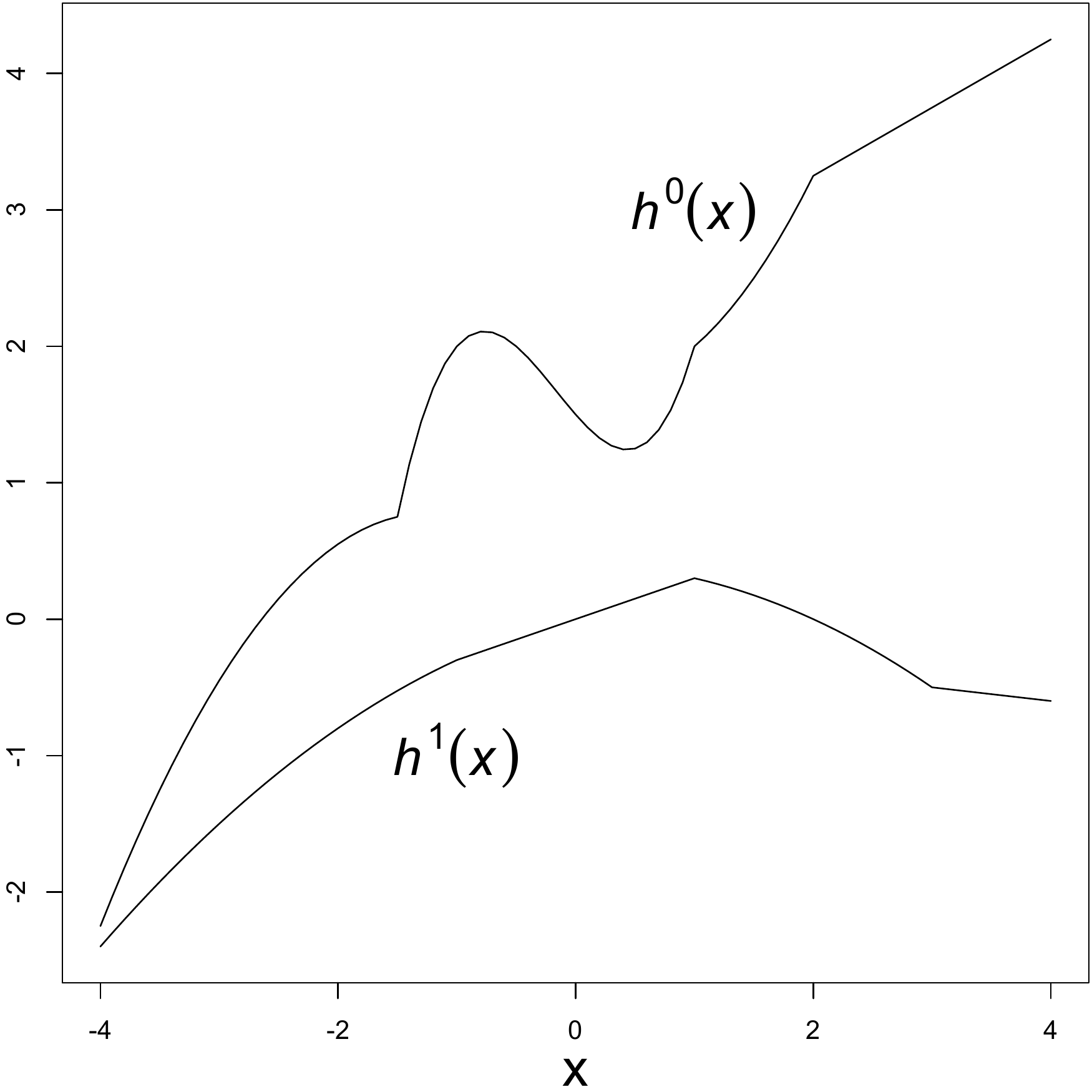}}} \\ 
&$1.5-x+0.5x^2+x^3 $&$-1.5\leq x< 1$, \\ 
&$1.75-0.25x+0.5x^2 $ &$1\leq x< 2$ ,\\
&$2.25+0.5x$&$2<x$,\\
\\
 \multirow{4}{*}{$h^1(x) =$} \rdelim\{{4}{3mm} &$0.2x-0.1x^2$ & $x<-1$, \\ 
&$0.3x $&$-1\leq x< 1$, \\ 
&$0.4-0.1x^2 $ &$1\leq x< 3$, \\
&$-0.2-0.1x$&$3<x$.
\end{tabular}
\vspace{12mm}

\noindent These choices were made to make polynomial approximation difficult. 

In Design C and D we use $\gamma' = (-0.27, 0.2, -0.15, 0.05, 0.15, - 0.1)$ and $\gamma' = (-0.3,0.5, -0.25,\linebreak[0] 0.15, 0.25, - 0.15)$, to generate low and high L1 respectively. The covariates are simulated such that  $x_1 \sim \N(0,1)$, $x_2$ and $x_4$ are Bernoulli distributed with probability $0.5+0.05x_1$ and $0.4+ 0.2x_3 $, $x_3 = 0.015x_1 + u_3,$ where $u_3$ is uniformly distributed in $(-0.5, 0.5)$, and $x_5 =0.04x_1+0.15x_2+0.05x_3+u_5,$ where $u_5 \sim \N(0,1)$. In Design C (linear) we use the outcome equations: $f^0(x)= -0.5 + 0.5x_1  +1.0x_2 + 0.5x_3 -1.0x_4  +1.0x_5 $, and $f^1(x)= 1.5 -1.5x_1 + 4.0 x_2-1.5x_3 + 3.0x_5 $. In Design D (nonlinear) we use, $f^0(x)= h^1(x_1) + 0.1 x_2 -0.3 x_3 -0.6 h^1(x_1)  \cdot x_4 -0.1x_5 $, $f^1(x)= h^0(x_1) +h^0(x_1) \cdot x_2 + 0.3 x_2 -0.2 x_3 -0.4x_4 +0.6x_5$.

For all designs we fit a correctly specified propensity score model, $g(x;\gamma)$, and $f^0$ and $f^1$ with linear models in $x$, i.e. Assumption 4 is fulfilled in Design A and C but not in Design B and D. We compare width and coverage of 95\% confidence intervals for $\tau$ with $UI(\tau; [0,0.2], [0,0.2],0.05)$ and $UI(\tau; [0,0.4], [0,0.4],0.05)$ (the corresponding confidence intervals and uncertainty intervals are used for $\tau^1$).

\subsection{Results}
Figure \ref{Bias.fig} displays magnitude of the empirical bias of the outcome regression estimator (both model misspecification and confouning bias), defined in Section \ref{Bias.section}, for the two non-linear designs with correctly specified propensity score models. We can see that bias$_M(\hat \tau_{OR})$ is larger when the propensity scores of the treated and untreated are more separeted (L1 low). However, even when L1 is high, bias$_M(\hat \tau_{OR})$ and bias$_C(\hat \tau_{OR})$ are of approximately the same magnitude when $\rho_0=\rho_1 = 0.05$, and for $\rho_0=\rho_1 \geq 0.1$ $\bias_C(\hat \tau_{OR})$ dominates $\bias_M(\hat \tau_{OR})$. Note also that in Design D $\bias_C(\hat \tau_{OR})$ and $\bias_M(\hat \tau_{OR})$ have different signs implying that the total bias is smaller than the confounding bias. Hence, the outcome regression estimator has a smaller total bias than the doubly robust estimator in such a case, since the confounding bias of the two estimators is almost the same. 

\begin{figure} [htbp]
\centering
\includegraphics[width=0.75\textwidth]{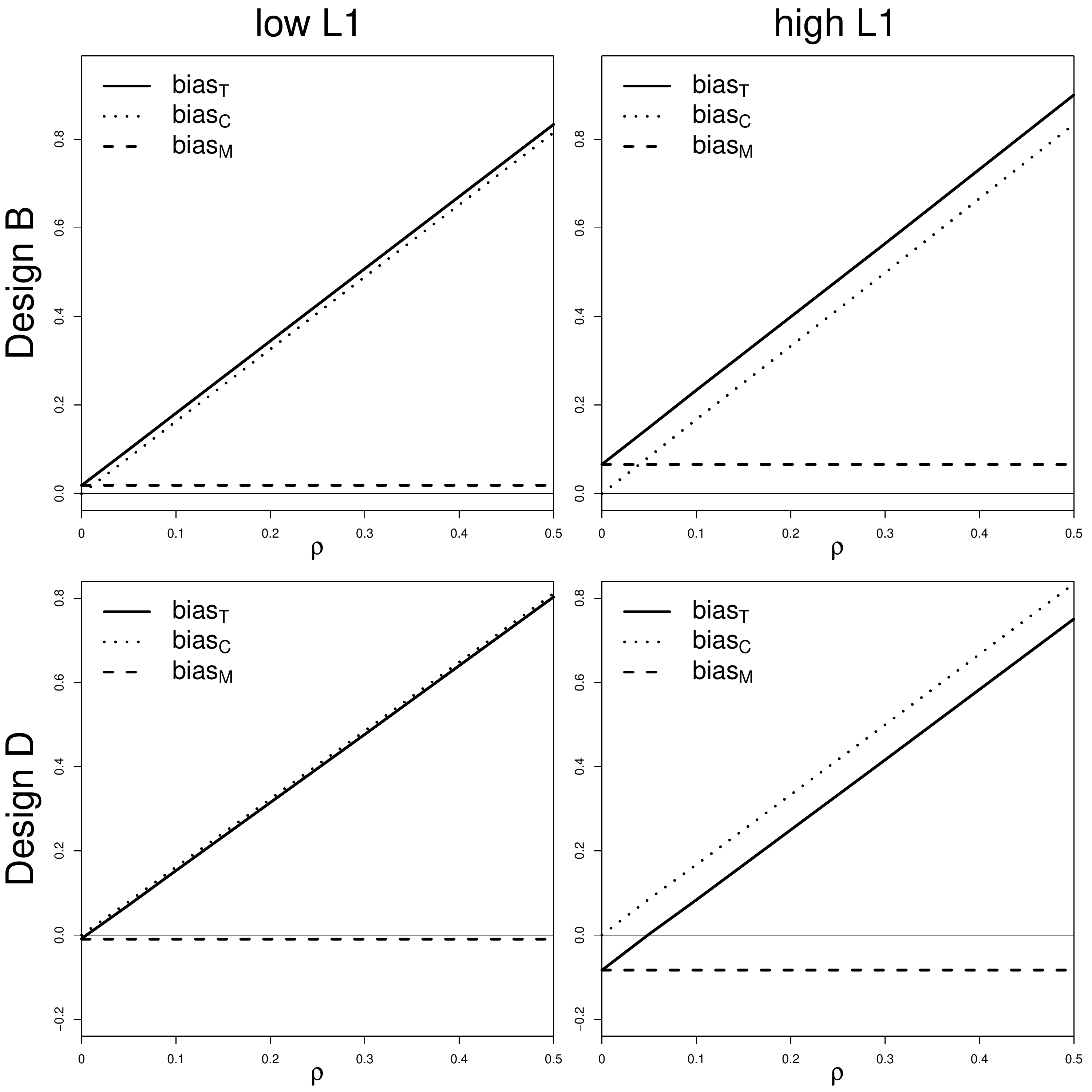}
\caption{The magnitude of $\bias_T(\hat\tau_{OR})$, $\bias_C(\hat\tau_{OR})$ and $\bias_T(\hat\tau_{OR})$ with varying $\rho_0=\rho_1=\rho$ for the two non-linear designs, with low and high imbalance in the propensity scores. }
\label{Bias.fig}
\end{figure}

The uncertainty intervals are wider than the confidence interval per definition, which is confirmed in Figure \ref{t.250.fig} - \ref{t.500.fig} and Appendix D. In particular, the uncertainty intervals derived under the assumption that $\rho_0$ and/or $\rho_1 \in [0,0.4]$ are around twice as wide as the corresponding confidence intervals. The empirical coverage of the 95$\%$ uncertainty intervals are, as expected, generally high if the assumption on $\rho_0$ and/or $\rho_1$ is met ($\rho_0$ and/or $\rho_1$ is covered by the pre-specified interval from which the uncertainty interval is derived); see Figure \ref{t.250.fig} - \ref{t.500.fig} and Appendix D. However, if the assumption is not met the empirical coverage is less than $95 \%$. When using the outcome regression estimator in Design B, we do not necessarily expect 95$\%$ coverage of the UI:s, even if the assumption on $\rho$ is met, because the outcome regression model is misspecified, and $\bias_M$ has the same sign as $\bias_C$. However, the empirical coverage is at least 95$\%$ due to three reasons: first, the uncertainty intervals have higher coverage than 95$\%$ if the models are correctly specified; second, $\sigma_0$ is overestimated due to model misspecification; and third, bias$_M$ has the same size or smaller than bias$_C$. Note finally that the empirical coverage of the 95$\%$ confidence intervals assuming no unobserved confounding is too low even for small $\rho$, see Figure \ref{t.250.fig} - \ref{t.500.fig} and Appendix D.

\begin{figure} [htbp]
\centering
\begin{subfigure}[b]{\textwidth}
\includegraphics[width=\textwidth]{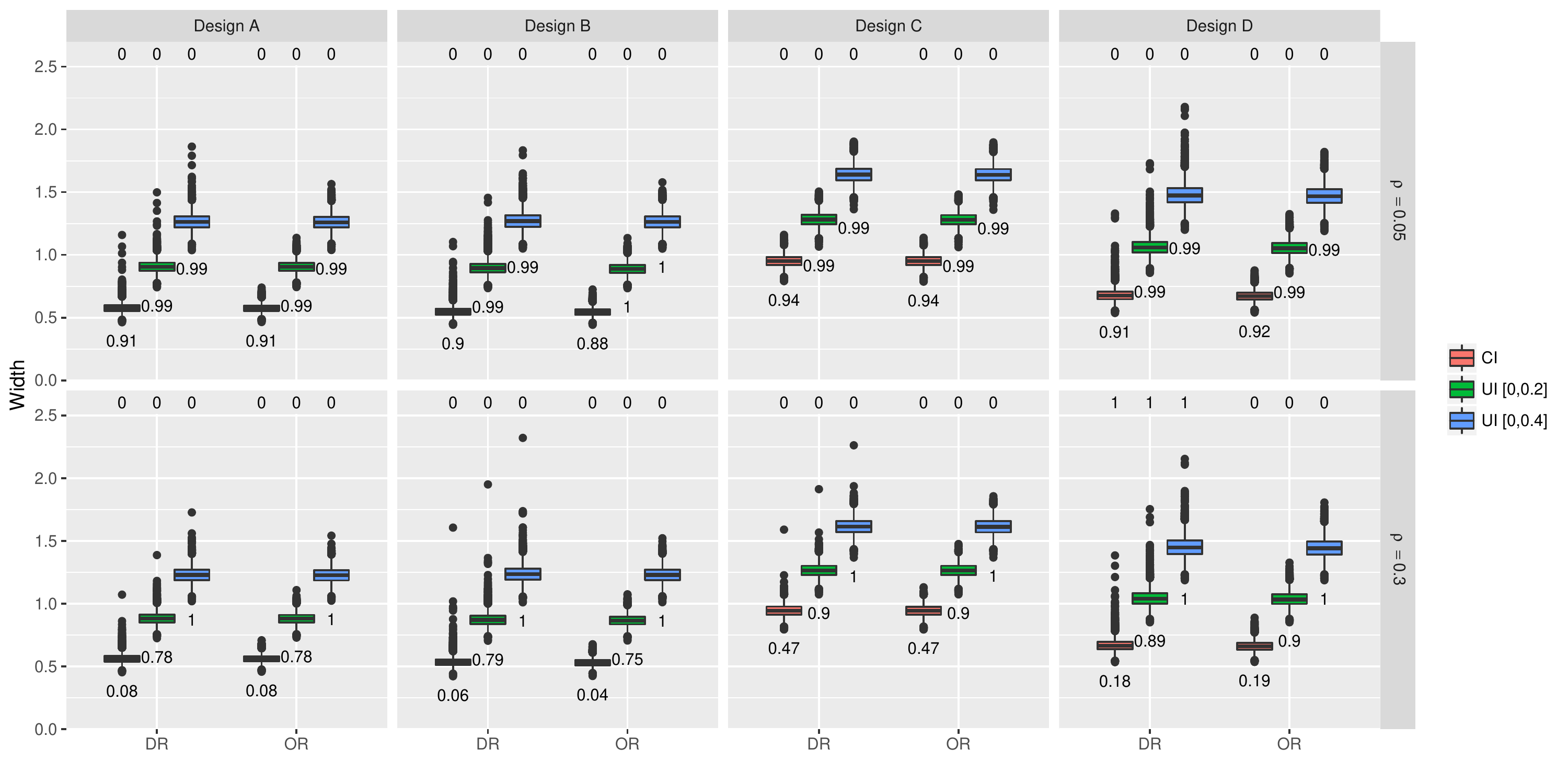}
\caption{Treatment assignment with low L1.}
\label{t.p1.250.fig}
\end{subfigure}
\begin{subfigure}[b]{\textwidth}
\includegraphics[width=\textwidth]{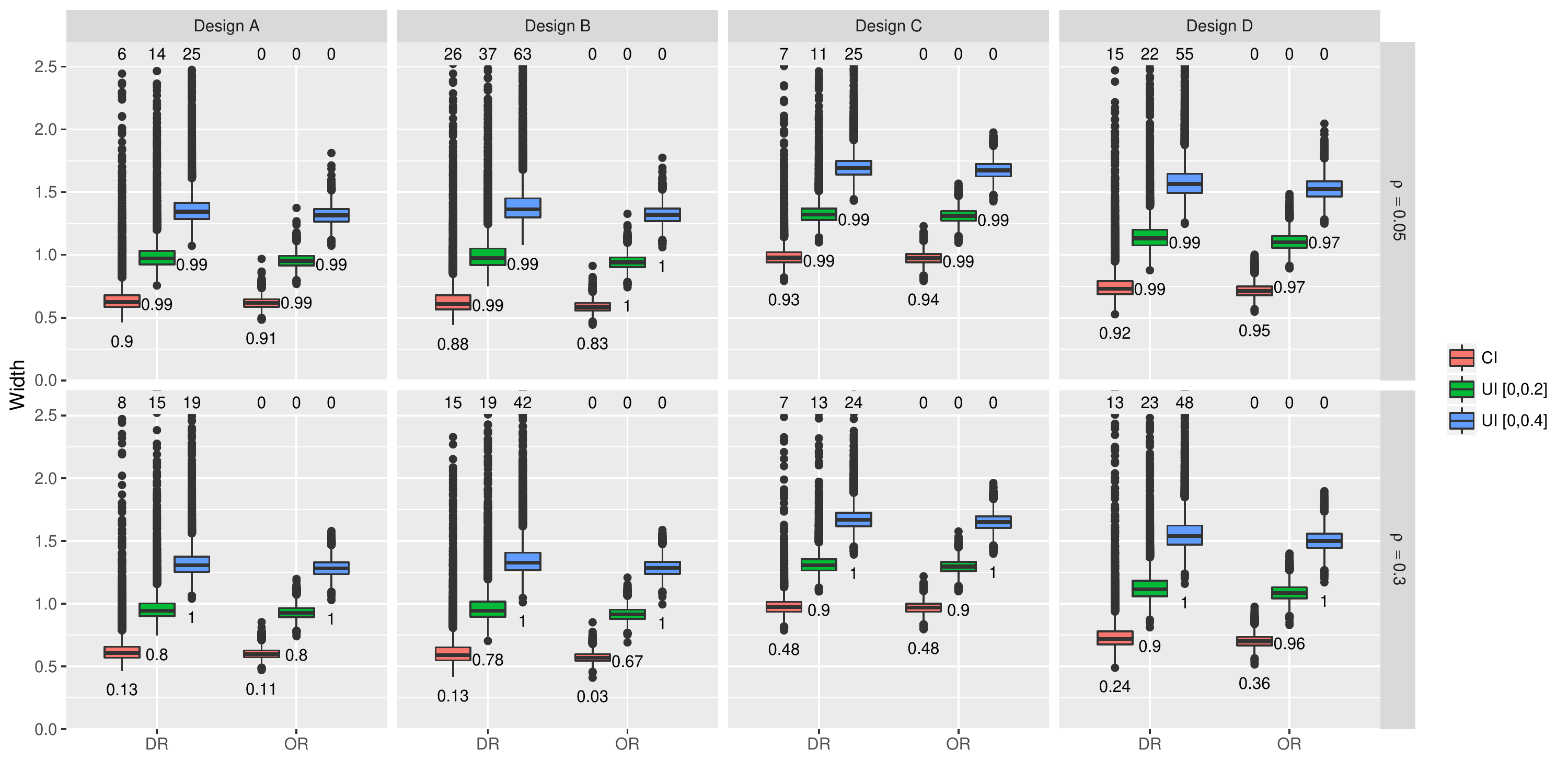}
\caption{Treatment assignment with high L1.}
\label{t.p2.250.fig}
\end{subfigure}
\caption{Boxplot of the width of two 95\% uncertainty intervals (assuming $\rho_j \in [0,0.2]$, green, and $\rho_j \in [0,0.4]$, blue, $j=0,1$) and the 95\% confidence interval, red, for the doubly robust (DR) and outcome regression (OR) estimator of $\tau$ under design A-D for $\rho_0=\rho_1=0.05$ and $0.3$, with sample size 250. The empirical coverage of each interval is written below each boxplot and the number of outliers that lie outside the window is written at the top of the window above each boxplot.}
\label{t.250.fig}
\end{figure}

\begin{figure} [htbp]
\centering
\begin{subfigure}[b]{\textwidth}
\includegraphics[width=\textwidth]{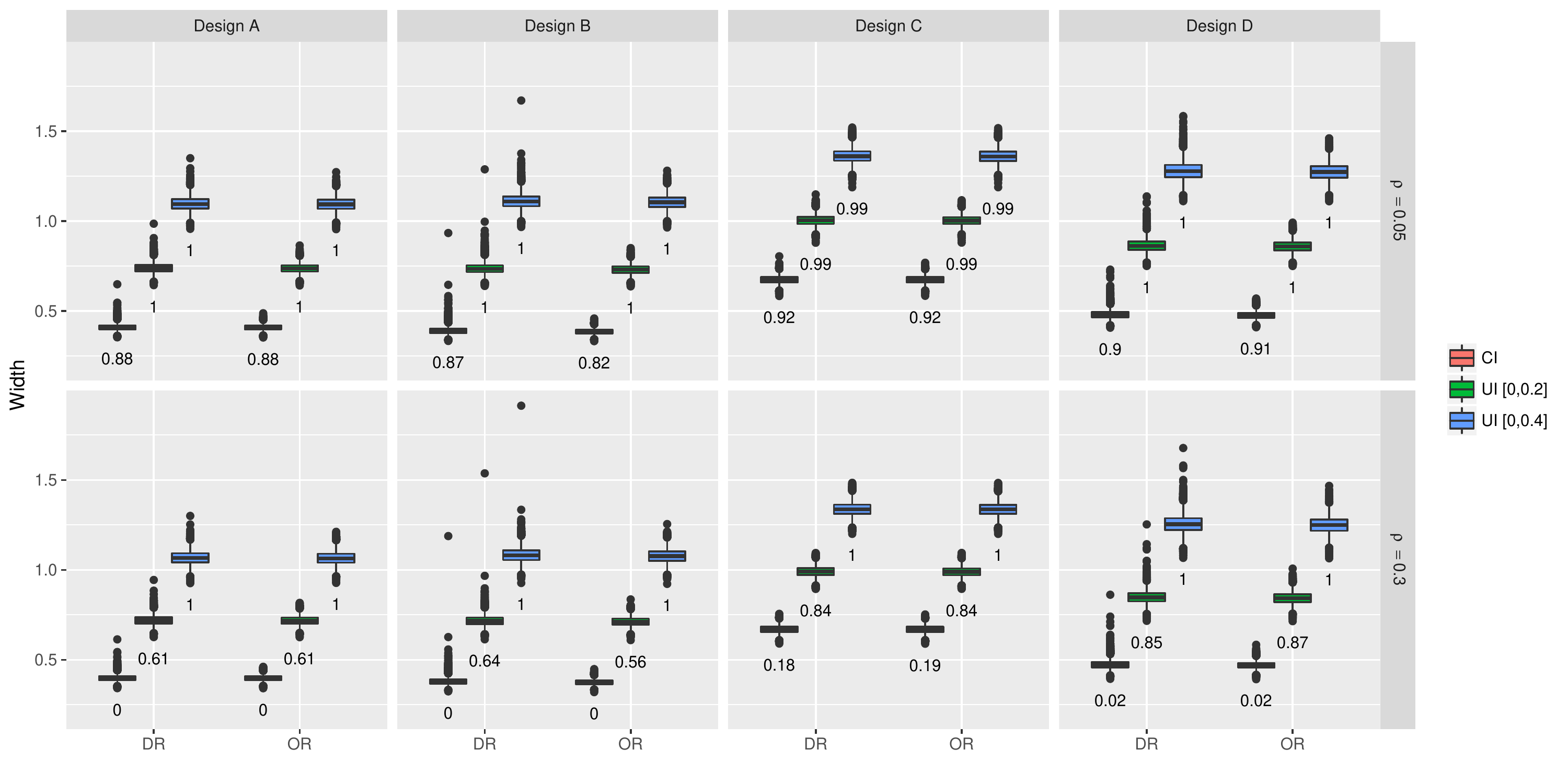}
\caption{Treatment assignment with low L1.}
\label{t.p1.500.fig}
\end{subfigure}
\begin{subfigure}[b]{\textwidth}
\includegraphics[width=\textwidth]{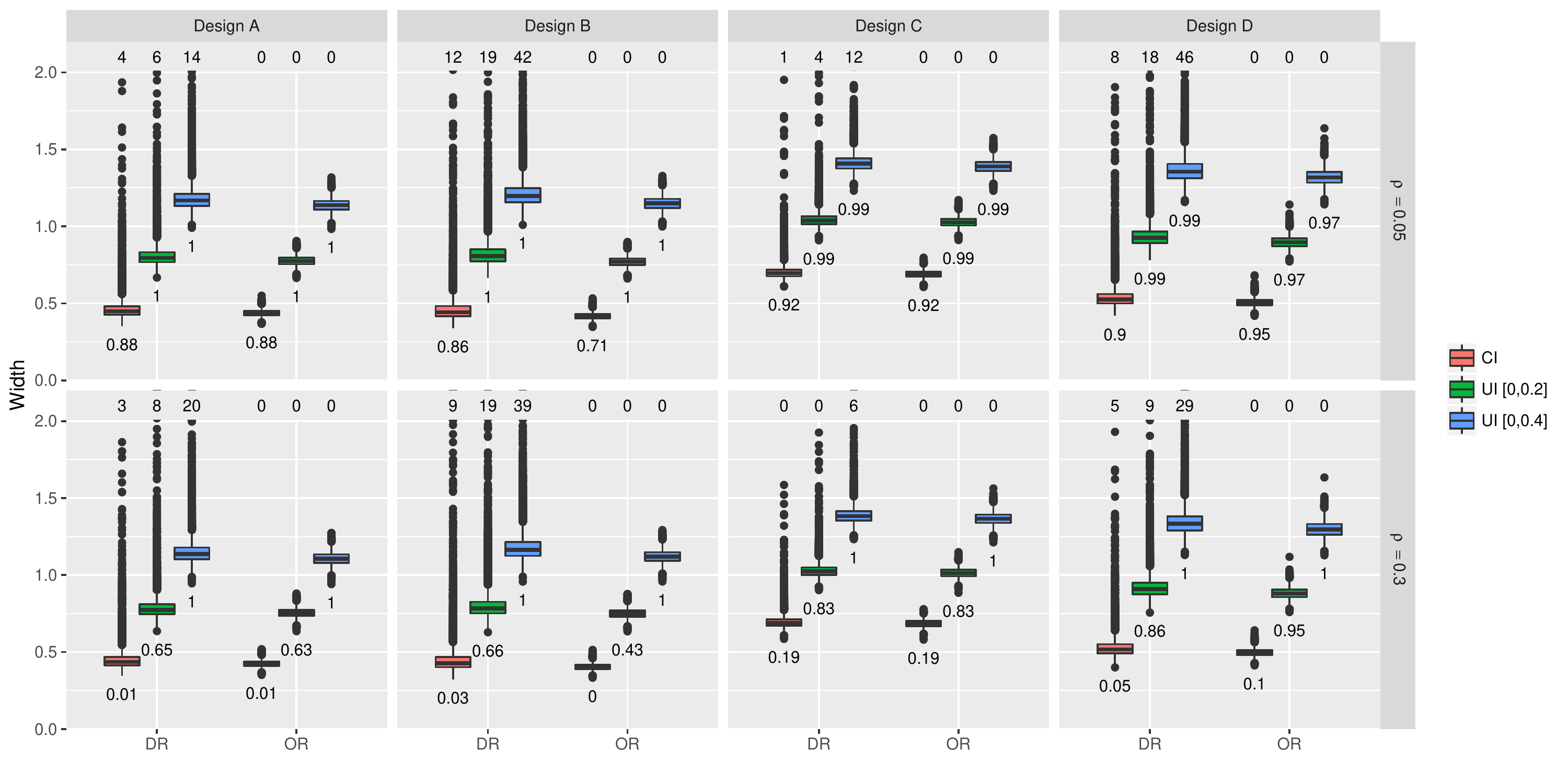}
\caption{Treatment assignment with high L1.}
\label{t.p2.500.fig}
\end{subfigure}
\caption{Boxplot of the width of two 95\% uncertainty intervals (assuming $\rho_j \in [0,0.2]$, green, and $\rho_j \in [0,0.4]$, blue, $j=0,1$) and the 95\% confidence interval, red, for the doubly robust (DR) and outcome regression (OR) estimator of $\tau$ under design A-D for $\rho_0=\rho_1=0.05$ and $0.3$, with sample size 500. The empirical coverage of each interval is written below each boxplot and the number of outliers that lie outside the window is written at the top of the window above each boxplot.}
\label{t.500.fig}
\end{figure}

\section{Effect of regular food intake on health}
SHARE is a longitudinal survey on health, socio-economic status and social networks of individuals aged 50 years or older from several European countries (\citealp{SHARE}). The sampling in SHARE is on a household level where all residents in the household (almost exclusively one individual or one man and one women) are interviewed. We focus this study on women in the 13 countries that participate in both wave 4 and 5 of SHARE, which were collected in 2011 (baseline) and 2013 (follow-up). The observed sample consists of 12 842 individuals. We are interested in investigating the causal effect of regular food intake on health. We define regular food intake as eating at least 3 full meals a day at baseline. A full meal is defined as eating more than 2 items or dishes when you sit down to eat. For example, eating potatoes, vegetables, and meat; or eating an egg, bread, and fruit are both considered full meals.  The health outcome used is change in maximum grip strength (in kg, maximum of 4 measures using a dynamometer) from 2011 to 2013. Grip strength is associated with both health-related quality of life, disability and mortality, see e.g. \cite{GS:Health} and \cite{GS:Death}. 

In order to estimate the causal effect of interest we control for covariates measured at baseline. These covariates include health, cognition, lifestyle, and socioeconomic variables as well as other background characteristics. The health variables include self reported health (excellent; very good; good; fair; or poor), number of problems with mobility (such as walking; lifting small objects; lifting heavy objects; etc., maximum 10), number of chronic diseases (such as diabetes; cancer; asthma; etc., maximum 15), depression (number of symptoms of depression, maximum 12, using the EURO-D scale), body mass index ($\mbox{kg/m}^2$) and limitations in daily life due to health problems (yes; no). We measure cognition with the number of animals the subject was able to state during 1 minute. The lifestyle variables consist of high alcohol use (drinking at least one glass of alcohol for women and two glasses for men at least 5 days a week), smoking (smoker; stopped smoking; non-smoker), physical inactivity (if respondents engaged in moderate to vigorous physical activity at most 1 to 3 times a month) and having a social network (have someone to discuss important things with, talk at least several times a week). The socioeconomic variables include education level (level 0-1; 2; 3; 4; or 5-6, using ISCED-97 scale) and whether or not the subject is living in an apartment or freestanding building. Finally, demographic characteristics consist of age, sex and country of residence (Austria; Germany; Sweden; Netherlands; Spain; Italy; France; Denmark; Switzerland; Belgium; Czech republic; Slovenia; or Estonia).

We estimated the causal effects by controlling for all main effects in the two potential outcome models. We used two different treatment assignment models, one with all main effects and one more flexible. The flexible treatment assignment model was fitted with a LASSO to select terms from all main effects together with interactions and quadratic terms. More specifically we use the R package {\tt glmnet} and  choose the largest value of the tuning parameter such that the mean cross validated error is within one standard error of the minimum, see \cite{glmnet} for details. The models including the selected terms are then refitted using maximum likelihood. The balance in the propensity scores is fairly similar between the main terms and LASSO based treatment assignment model, see Figure \ref{prop.sc.hist}.

\begin{figure}
\caption{Overlaid histograms showing the amount of imbalance in the propensity scores between treated and untreated for the two different sets of covariates, main effects (left) and LASSO (right). This bin size is default from the function {\tt hist} in {\tt R} statistical software and this bin size was also used to derive L1 (0.23 for main effects and 0.25 for LASSO).}
\begin{center}
\includegraphics[width=0.7\textwidth]{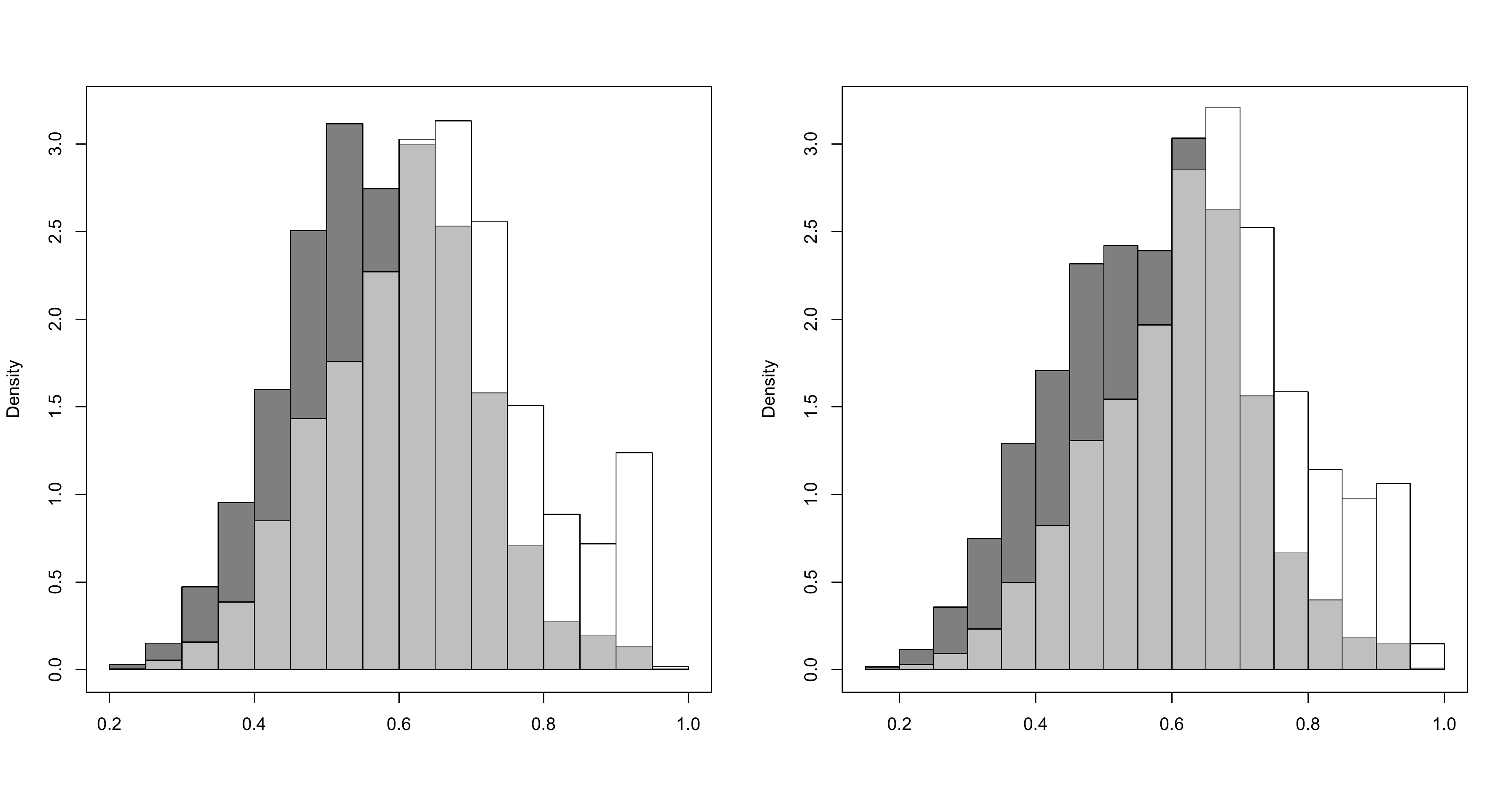}
\end{center}
\label{prop.sc.hist}
\end{figure}

\begin{table}[ht]
\centering
\caption{Estimates of the effect of regular food intake on change in grip strength, using doubly robust and outcome regression estimators. The confidence intervals are derived assuming ignorability of treatment assignment. The uncertainty intervals are derived assuming that $\rho \in [-0.02, 0.02]$. The estimates are derived with two different treatment assignment models, including main effects (top) and using LASSO as selection method for main effects, interactions and quadratic terms (bottom).}
\begin{tabular}{clllll}
 && coef & CI&UI, $|\rho|\leq0.02$ \\ 
  \cline{2-5}
\multirow{5}{*}{\rot{Main effects}}&$\tau_{OR}^1$ & 0.28 & (0.07, 0.48)&(-0.11, 0.66)\\ 
&$\tau_{DR}^1$  & 0.26 & (0.05, 0.46)&(-0.12, 0.64)\\ 
&$\tau_{OR}$ & 0.27 & (0.08, 0.46) &(-0.09, 0.63)\\ 
&$\tau_{DR}$ & 0.26 & (0.06, 0.45)&(-0.11, 0.62)\\ 
\multirow{4}{*}{\rot{LASSO}}&$\tau_{OR}^1$ & 0.28 & (0.07, 0.48)&(-0.10, 0.65)\\ 
&$\tau_{DR}^1$  & 0.24 & (0.03, 0.45)&(-0.15, 0.63)\\ 
&$\tau_{OR}$ & 0.27 & (0.08, 0.46) &(-0.09, 0.63) \\ 
&$\tau_{DR}$ & 0.25 & (0.05, 0.44)&(-0.12, 0.62)\\ 
  \cline{2-5}
\end{tabular}
\label{ex.tab}
\end{table}

In Table \ref{ex.tab} we can see that the estimates of $\tau$ and $\tau_1$ assuming ignorability of treatment assignment, are significant (95\% CI do not cover zero) for all estimators and estimated to between $0.24$ and $0.28$, which can be compared to $0.85$, the average decrease in maximum grip strength of the total study sample. All estimates obtained are fairly similar, in particular when compared to the extra variation introduced by the uncertainty in unobserved confounding (compare UIs with CIs). Indeed, the uncertainty intervals assuming $\rho \in [-0.02, 0.02]$ contain 0. The bounds $\max|\rho|=0.02$ corresponds to unobserved confounding explaining, e.g., 2 \% of the unexplained variation in the outcome models and the treatment assignment models (see interpration of $\rho$ given in Section \ref{Identif.sec} above). We have no reason to believe that such unobserved confounding is unreasonable. Thus, here, taking into account uncertainty in unobserved confounding yields inconclusive results, i.e. the data does not give us evidence for a positive effect in contrast with the naive conclusion that would typically be taken by only considering sampling uncertainty through classical confidence intervals. 

Note, finally, that this analysis has been performed assuming dropout at follow up to be ignorable. Non-ignorable dropout if suspected could be dealt with similarly by introducing a new bias parameter \citep{Genback:2015}, thereby further increasing even more the uncertainty around the estimates obtained.


\section{Discussion}
Causal inference from observational data is often based on the assumption of no unobserved confounding variables. This identifying assumption is typically not empirically testable without further assumptions and/or information such as, e.g., the known existence of instrumental variables \citep{LunaJ:2014}. This paper proposes an inferential approach for outcome regression and doubly robust estimators that takes into account uncertainty on the possible existence of unobserved confounding. The method proposed is computationally fast and easy to apply (the \texttt{R}-package \texttt{ui} is available at \texttt{http://stat4reg.se/software}). 

Outcome regression and double robust estimators make model assumptions which, if mistaken, also imply bias.  On the other hand, model misspecification can in principle be empirically investigated. In the simulated settings, even though the model misspecification was quite severe for the outcome regression estimator, bias due to unobserved confounding dominated model misspecification bias when $\rho \geq 0.1$ and even more so when propensity scores were not too close to zero or one. More generally, while model misspecification can under some assumptions be made arbitrarily small asymptotically (by increasing model complexity), bias/uncertainty due to unobserved confounding remains unchanged and therefore more relevant when increasing sample size.

We have focused on misspecification of the outcome models instead of the treatment assignment model. The latter is not only more challenging theoretically, but more importantly, one can argue that model building is less difficult for the treatment assignment model than for outcome models since for the former all data is available and no extrapolation is performed, while outcome models are fitted only on one sub-sample at a time (e.g., the controls) and are used to extrapolate on the other sub-sample (e.g., the treated). Extrapolations are thus done on part of the sample space which is sparsly populated, hence, where the model specification is difficult to check. Yet, it has been shown that, for double robust estimators, mild misspecification of both models (for treatment assignment and outcome) may lead to large bias in specific situations, in which case regression outcome estimation may be preferable \citep{kang:07}, or improved versions of the classic double robust estimator used here; see \cite{RotnitzkyStijn:15} for a review.   
  
The proposed uncertainty intervals can be used to perform a sensitivity analysis. For example, for all the estimators presented in Table \ref{ex.tab}, the UIs would approximately be bounded below by zero if constructed using $\rho\in [-0.01,0.01]$. Thus, the 5\% significance conclusion is here sensitive to unobserved confounding of magnitude $\max|\rho|\geq 0.01$. However, our experience is that sensitivity analyses are difficult to communicate to the layman for whom statistical hypothesis testing may already be a difficult concept. 
 We therefore advocate here the more intuitive interval estimation approach, i.e. providing an UI for the effect of interest given some a priori assumption on unobserved confounding and a desired coverage level.

\section*{Acknowledgements}

We are grateful to Elena Stanghellini, Arvid Sj\"olander, Anders Lundquist and Anita Lindmark for helpful comments. This work was supported by the Swedish Research Council for Health, Working Life and Welfare and the Marianne and Markus Wallenberg Foundation.\vspace*{-8pt}


  \bibliographystyle{biom} 
 \bibliography{referenser}


\section*{Appendix A}

\begin{Lem}
\label{Lem1}
Under Assumption \ref{Ass1b}, \ref{Ass2b}, \ref{Ass3} and \ref{Ass4} the bias of the ordinary least squares estimate of $\beta^j$, given $\rho_j$ is:  
\begin{equation*}\label{distortion.eq}
E(\hat\beta^j_{OLS})=\beta^j+ (-1)^{1+j}  \rho_j \sigma_{j} E((\X_j' \X_j)^{-1} \X_j' {\boldsymbol \lambda_j } ) ,
\end{equation*}
for $j=0,1$; where ${\boldsymbol \lambda}_j = [\lambda_j(g(x_1; \gamma)), \cdots , \lambda_j(g(x_{n_j}; \gamma))]'$, $\lambda_0(g(x_i; \gamma))= \frac{\phi(g(x_i; \gamma))}{1-\Phi(g(x_i; \gamma))}$,  $\lambda_1(g(x_i; \gamma))=\frac{\phi(g(x_i; \gamma))}{\Phi(g(x_i; \gamma))}$. The proof follow from $E(y^j|x,z=j)=E(y^j|x)+ (-1)^{1+j}\rho_j\sigma_j \lambda_j(g(x; \gamma))$ and can be found for $z=1$ in \cite{Genback:2015}, the proof when $z=0$ is similar.
\end{Lem}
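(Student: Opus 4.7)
The plan is to decompose the OLS estimator in the usual residual form and reduce the problem to computing a conditional expectation in a Gaussian selection model. Under Assumption \ref{Ass4}, $\y_j = \X_j \beta^j + \boldsymbol{\varepsilon}^j$, where $\boldsymbol{\varepsilon}^j$ collects $\{\varepsilon^j_i : i \in \mathcal{I}_j\}$, so substituting into $\hat\beta^j_{OLS} = (\X_j'\X_j)^{-1}\X_j'\y_j$ yields
\[
\hat\beta^j_{OLS} - \beta^j = (\X_j'\X_j)^{-1}\X_j' \boldsymbol{\varepsilon}^j.
\]
Taking expectations and conditioning on $\X_j$ via the tower property, it suffices to evaluate $E(\varepsilon^j_i \mid x_i, z_i = j)$ for each $i \in \mathcal{I}_j$ and then insert the resulting vector inside $(\X_j'\X_j)^{-1}\X_j'(\cdot)$.

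Next, I would use Assumption \ref{Ass3} to exploit the Gaussian structure of $\eta$. Writing $\varepsilon^j = \rho_j \sigma_j \eta + \xi_j$ with $\xi_j \ci \eta$ and $E(\xi_j) = 0$ immediately gives $E(\varepsilon^j \mid x, z=j) = \rho_j \sigma_j\, E(\eta \mid x, z = j)$. Since $z = \mathbf{I}(z^* > 0)$ with $z^* = g(x;\gamma) + \eta$ and $\eta \mid x \sim N(0,1)$, the event $\{z = 1\}$ coincides with $\{\eta > -g(x;\gamma)\}$ and $\{z=0\}$ with $\{\eta \le -g(x;\gamma)\}$. The standard truncated-normal formulas, together with the symmetries $\phi(-t) = \phi(t)$ and $\Phi(-t) = 1-\Phi(t)$, then yield $E(\eta \mid x, z=1) = \phi(g)/\Phi(g) = \lambda_1(g)$ and $E(\eta \mid x, z=0) = -\phi(g)/(1-\Phi(g)) = -\lambda_0(g)$, writing $g=g(x;\gamma)$ for brevity. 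Combining both cases gives
\[
E(\varepsilon^j_i \mid x_i, z_i = j) = (-1)^{1+j} \rho_j \sigma_j\, \lambda_j(g(x_i;\gamma)),
\]
which is precisely the sign $(-1)^{1+j}$ appearing in the statement.

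Assembling these conditional expectations into the vector $\boldsymbol\lambda_j$ and substituting back gives
\[
E(\hat\beta^j_{OLS}) = \beta^j + (-1)^{1+j} \rho_j \sigma_j\, E\bigl((\X_j'\X_j)^{-1}\X_j' \boldsymbol\lambda_j\bigr),
\]
as required. The only nonroutine step is the truncated-normal evaluation, which is the same computation that underlies Heckman's selection correction; the $j=1$ case is carried out explicitly in \citet{Genback:2015}, and the $j=0$ case is symmetric, with the truncation reversed so that the inverse Mill's ratio acquires the opposite sign. There is no genuine obstacle, only bookkeeping: one must carefully translate the selection event $\{z=j\}$ into the correct tail of $\eta$ so that $\lambda_j$ is assigned its correct sign, and keep the matrix $(\X_j'\X_j)^{-1}\X_j'$ inside the outer expectation rather than commuting it through, since $\X_j$ is itself a selected (and therefore random) design matrix.
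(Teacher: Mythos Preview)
Your proposal is correct and follows essentially the same route as the paper: the paper does not give a full proof but simply states that it follows from the identity $E(y^j\mid x,z=j)=E(y^j\mid x)+(-1)^{1+j}\rho_j\sigma_j\lambda_j(g(x;\gamma))$ and refers to \cite{Genback:2015} for the $z=1$ case, which is exactly the truncated-normal computation you carry out. Your write-up is in fact more detailed than the paper's own sketch, and your final remark about keeping $(\X_j'\X_j)^{-1}\X_j'$ inside the outer expectation (since $\X_j$ is a selected design) is a useful clarification that the paper leaves implicit.
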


\begin{subtheorem}{RegAss}
\label{RegAss1}
\begin{RegAss}
\label{RegAss1a}
There exist a constant $c$ such that $E\left(\frac{1}{( \frac{1}{n}\sum_i z_i)^2}\right)<c<\infty, \forall n$ and $E(z)>0.$
\end{RegAss}
\begin{RegAss}
\label{RegAss1b}
$\left|E(z_i y_i)\right|<\infty$ and $\left|E(\hat\beta^{0'}_{OLS} z_i x_i)\right|<\infty$.
\end{RegAss}
\end{subtheorem}

\begin{Lem}
\label{Lem2}
Under Regularity Assumption \ref{RegAss1a}:
$$\limes E\left(\frac{1}{n_1}\sum_{i=1}^{n}z_i f(x_i,y_i)\right) = \frac{1}{E(z_i)}E\left(z_i f(x_i,y_i)\right),$$ 
for any function $f$ of $x_i$ and $y_i$ such that $\left| E\left(z_i f(x_i,y_i)\right)\right|<\infty$.
\end{Lem}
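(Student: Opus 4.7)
The plan is to reduce this to a law of large numbers combined with a uniform-integrability argument anchored on Regularity Assumption \ref{RegAss1a}. First I would rewrite the quantity inside the expectation as a ratio of i.i.d.\ sample averages:
$$\frac{1}{n_1}\sum_{i=1}^n z_i f(x_i,y_i) = \frac{S_n}{T_n}, \qquad S_n := \frac{1}{n}\sum_{i=1}^n z_i f(x_i,y_i),\ \ T_n := \frac{1}{n}\sum_{i=1}^n z_i.$$
By the strong law of large numbers, $S_n \to \mu := E(z_1 f(x_1,y_1))$ and $T_n \to \pi := E(z_1) > 0$ almost surely, so by the continuous mapping theorem $S_n/T_n \to \mu/\pi$ in probability (noting that $\Pr(T_n=0) \to 0$).

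Second, I would upgrade this convergence in probability to convergence of expectations by showing $E\bigl|S_n/T_n - \mu/\pi\bigr| \to 0$. Starting from the identity
$$\frac{S_n}{T_n} - \frac{\mu}{\pi} = \frac{\pi S_n - \mu T_n}{\pi\, T_n},$$
applying the triangle inequality and then the Cauchy--Schwarz inequality gives
$$E\left|\frac{S_n}{T_n} - \frac{\mu}{\pi}\right| \;\leq\; \frac{1}{\pi}\sqrt{E\bigl(\pi S_n - \mu T_n\bigr)^{2}}\;\sqrt{E\bigl(T_n^{-2}\bigr)}.$$
The second factor is bounded uniformly in $n$ by $\sqrt{c}$ by Regularity Assumption \ref{RegAss1a}. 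The first factor is the square root of the variance of the sample mean of the i.i.d.\ mean-zero random variables $z_i(\pi f(x_i,y_i) - \mu)$, so it is of order $n^{-1/2}$ and vanishes in the limit, yielding the claim.

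The main technical obstacle is exactly the control of the denominator, which can in principle be arbitrarily close to zero: Regularity Assumption \ref{RegAss1a} is tailor-made to rule out this pathology in an $L^2$ sense, and without it convergence in probability of $S_n/T_n$ would not translate into convergence of its expectation. A minor additional point is that the Cauchy--Schwarz step implicitly asks for a finite second moment of $z_1 f(x_1,y_1)$, which is a mild strengthening of the stated integrability condition $|E(z_1 f(x_1,y_1))| < \infty$; this holds in every application of the lemma in Propositions \ref{Prop.OR1}--\ref{Prop.DR}, where $f$ is built from inverse Mills ratios, regression residuals, and indicator variables whose squared moments are finite under Assumption \ref{Ass3} and Assumption \ref{Ass4}.
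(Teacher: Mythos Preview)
Your argument is correct but structured differently from the paper's. Both proofs pivot on the same second-moment bound $E(T_n^{-2})<c$ from Regularity Assumption~\ref{RegAss1a}, yet deploy it in distinct ways.

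The paper decomposes the expectation of the product as $E(AB)=E(A)\,E(B)+\Cov(A,B)$ with $A=1/T_n$ and $B=nS_n$, bounds the covariance by a quantity of order $|E(z_if)|\cdot E(T_n^{-2})/n\to 0$, and then shows $E(1/T_n)\to 1/E(z)$ separately via the continuous mapping theorem and dominated convergence (the dominating bound being $E|1/T_n|\leq\sqrt{c}$). Your route is more direct: you bound the $L^1$ distance $E|S_n/T_n-\mu/\pi|$ in one stroke by Cauchy--Schwarz, extracting the factor $\sqrt{E(T_n^{-2})}\leq\sqrt{c}$ and an $O(n^{-1/2})$ factor from the variance of a centred sample mean. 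In fact your preliminary step establishing convergence in probability is redundant once the Cauchy--Schwarz bound gives $L^1$ convergence outright. The trade-off is exactly the one you flag: your argument needs $E\bigl(z_i f(x_i,y_i)\bigr)^2<\infty$, whereas the paper's covariance bound formally relies only on the first moment; for the functions $f$ that actually appear in Propositions~\ref{Prop.OR1}--\ref{Prop.DR} this strengthening is harmless.
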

	\begin{proof}
 $$\limes E\left(\frac{1}{n_1}\sum \limits_{i =1}^nz_i f(x_i,y_i)\right)=\limes  E\left(\frac{1}{\frac{1}{n}\sum_i z_i}\right) E\left(\frac{1}{n}\sum \limits_{i =1}^nz_i f(x_i,y_i)\right)=\frac{1}{E(z_i)}E\left(z_i f(x_i,y_i)\right).$$
The first equality follow from:
$$
\Cov\left(\frac{1}{\sum_j z_j}, \sum z_i f(x_i,y_i)\right)= \begin{cases}\frac{1}{n} E\left(z_i f(x_i,y_i)\right) E\left(\frac{-1}{\frac{1}{n^2}(\sum_i z_i))(\sum_i z_i+1)}\right) \mbox{if }z_i=0\\
0\mbox{ if } z_i=1.\end{cases}
$$

if there exist a constant $c$ such such that $E\left(\frac{1}{( \frac{1}{n}\sum_i z_i)^2}\right)<c<\infty$ then:
$$
\limes \left| \Cov\left(\frac{1}{\sum_j z_j}, \sum z_i f(x_i,y_i)\right) \right| \leq \limes\left| E\left(z_i f(x_i,y_i)\right)\right| E\left(\frac{1}{(\frac{1}{n}\sum_iz_i)^2} \right)\frac{1}{n}=0.
$$
The second equality follow from: $\frac{1}{n}\sum_i z_i\xrightarrow{p}E(z)$ (weak law of large numbers) and  $\frac{1}{\frac{1}{n}\sum_i z_i}\xrightarrow{p}\frac{1}{E(z)}$ if $E(z)>0$ (continuous mapping theorem). By dominated convergence theorem $\limes E\left(\frac{1}{\frac{1}{n}\sum_i z_i}\right)=\frac{1}{E(z)}$ (since $E\left(\frac{1}{| \frac{1}{n}\sum_i z_i|}\right)<\sqrt{c}<\infty$), the theorems used can be found for instance in \cite{Rosenthal}. 
	\end{proof}

To calculate the bias of the doubly robust estimator we need a regularity assumption to be able to exchange $\hat p(x)$ with $p(x)$ using the uniform integrability convergence theorem (\citealp{Rosenthal}).

\begin{RegAss}
\label{RegAss2} 
$\sum \limits_{i =1}^n z_i \frac{f(x_i,y_i)-\hat\beta^{1'}_{OLS}x_i}{\hat{p}(x_i)}$, and $\sum \limits_{i =1}^n (1-z_i)\frac{f(x_i,y_i)-\hat\beta^{0'}_{OLS}x_i}{1-\hat{p}(x_i)}$ are uniformly integrable $\forall x \in \mathcal{X}$, the support of $x.$
\end{RegAss}

\section*{Appendix B}
\subsection*{Proof of Proposition \ref{Prop.OR1}}

Under, Assumption \ref{Ass1b}, \ref{Ass3}, \ref{Ass4a} and Regularity Assumption \ref{RegAss1}: 
\begin{align*}
\limes \mbox{bias}_T(\hat \tau_{OR}^1)&=\limes E(\hat\tau^1_{OR})- \tau^1=\limes E\left(\frac{1}{n_1}\sum \limits_{i =1}^nz_i y_i-\hat\beta^{0'}_{OLS}\frac{1}{n_1}\sum_{i =1}^{n} z_i x_i \right) - \tau^1\\
&= E(y^1\mid z=1) - E\left(\hat\beta^{0'}_{OLS} z_i x_i \right) \frac{1}{E(z)} - \tau^1\\
&= E(y^0\mid z=1)-E(\hat\beta^{0'}_{OLS})E(x\mid z=1)  \\
&=E(y^0\mid z=1)-\beta^{0'} E(x\mid z=1) +  \rho_0 \sigma_0 E((\X_0'\X_0)^{-1}\X_0' \boldsymbol \lambda_0) E(x\mid z=1) \\
&=E(y^0\mid z=1)-(E(y^0\mid z=1)-\rho_0\sigma_0 E(\lambda_1(g(x; \gamma)) \mid z=1)) \\
&\qquad\qquad\qquad\qquad+  \rho_0 \sigma_0 E((\X_0'\X_0)^{-1}\X_0' \boldsymbol \lambda_0) E(x\mid z=1)  \\
&= \rho_0\sigma_0 \left( E(\lambda_1(g(x; \gamma))\mid z=1)+ E((\X_0'\X_0)^{-1}\X_0' \boldsymbol \lambda_0) E(x\mid z=1)\right). 
\end{align*}
The equality between line 1 and 2 follow from Lemma 2. The equality between line 3 and 4 follow from Lemma 1 and the equality between line 4 and 5 follow from the fact that $E(y^0\mid z=1)=E(\beta^{0'}x_i +\rho_0\sigma_0\lambda_1(g(x_i; \gamma))\mid z_i=1)=\beta^{0'}E(x_i\mid z_i=1) +\rho_0\sigma_0E(\lambda_1(g(x_i; \gamma))\mid z_i=1)$. 

Since Assumption \ref{Ass4} is fulfilled (the regression model is correctly specified) $\bias_M(\hat \tau_{OR}^1)=0$, and therefore $\bias_T(\hat \tau_{OR}^1)=\bias_C(\hat \tau_{OR}^1)$. \qed

\vspace{5mm}
\noindent Under Assumption \ref{Ass1b}, \ref{Ass2b}, \ref{Ass3} and \ref{Ass4}: 
\begin{align*}
\mbox{bias}_T(\hat \tau_{OR})&=E\left[\frac{1}{n}\sum \limits_{i =1}^n\left( \hat\beta^{1'}_{OLS}x_i-\hat\beta^{0'}_{OLS}x_i\right)\right] -\tau=E\left[\frac{1}{n}\left({\bf1}_n \X \hat\beta^{1'}_{OLS}- {\bf1}_n \X \hat\beta^{0'}_{OLS}\right)\right] - \tau\\
&=\frac{1}{n}E\left[ {\bf1}_n \X \left(  \beta^1 - \beta^0 \right) +  {\bf1}_n \X  \left(\rho_1\sigma_1(\X_1'\X_1)^{-1}\X_1' \boldsymbol\lambda_1 +\rho_0\sigma_0(\X_0'\X_0)^{-1}\X_0' \boldsymbol \lambda_0 \right)\right]-\tau\\
&= \frac{1}{n}E\left[ {\bf1}_n \X  \left(\rho_1\sigma_1(\X_1'\X_1)^{-1}\X_1' \boldsymbol\lambda_1 +\rho_0\sigma_0(\X_0'\X_0)^{-1}\X_0' \boldsymbol \lambda_0 \right)\right].
\end{align*}
Where the equality between line 1 and 2 follow from Lemma 1. Since Assumption \ref{Ass4} is fulfilled (the regression model is correctly specified) $\bias_M(\hat \tau_{OR})=0$, and therefore $\bias_T(\hat \tau_{OR})=\bias_C(\hat \tau_{OR})$. \qed

\subsection*{Proof Proposition \ref{Prop.OR2}}
Under, Assumption \ref{Ass1b}, \ref{Ass3} and Regularity Assumption \ref{RegAss1}: 
\begin{align*}
\limes \bias_T(\hat \tau^1_{OR}) &=E(y^0|z=1)  -E(\hat\beta^{0'}_{OLS}) E(x |z=1)\\
&=E[ E(y^0|x,z=1) |z=1] \\
&\qquad\qquad- E\left[(\X_0'\X_0)^{-1}\X_0' f^0(x) - \rho_0 \sigma_0(\X_0'\X_0)^{-1}\X_0'  {\boldsymbol \lambda_0 } \right] E(x|z=1) \\
&=E\left[ f^0(x) + \rho_0\sigma_0 \lambda_1(g(x; \gamma)) |z=1\right] \\
&\qquad\qquad- E\left[(\X_0'\X_0)^{-1}\X_0' f^0(x) - \rho_0 \sigma_0(\X_0'\X_0)^{-1}\X_0'  {\boldsymbol \lambda_0 } \right] E(x|z=1) \\
&=E\left[ f^0(x) |z=1\right] - E\left[(\X_0'\X_0)^{-1}\X_0' f^0(x) \right] E(x|z=1) +\mbox{bias}_C(\tau_{OR}^1)\\
&=\mbox{bias}_M(\tau_{OR}^1) +\mbox{bias}_C(\tau_{OR}^1),
\end{align*}
where the first equality can be seen in proof of Proposition \ref{Prop.OR1}.  \qed

\vspace{5mm}
\noindent Under, Assumption \ref{Ass1b}, \ref{Ass2b} and \ref{Ass3}: 
\begin{align*}
\bias_T(\hat \tau_{OR})&=E\left[\frac{1}{n}\sum \limits_{i =1}^n\left( \hat\beta^{1'}_{OLS}x_i-\hat\beta^{0'}_{OLS}x_i\right)\right] - \tau \\
&=E\left[\frac{1}{n}\left({\bf1}_n \X \hat\beta^{1'}_{OLS}- {\bf1}_n \X \hat\beta^{0'}_{OLS}\right)\right] - \tau\\
&= E\left[\frac{1}{n}E\left( \left.{\bf1}_n \X (\X_1'\X_1)^{-1}\X_1' \boldsymbol \y_1 - {\bf1}_n \X (\X_0'\X_0)^{-1}\X_0' \boldsymbol \y_0\right| \X\right)\right] -\tau\\
&=\frac{1}{n}E\left[ {\bf1}_n \X \left((\X_1'\X_1)^{-1}\X_1'E\left( \left.  \boldsymbol \y_1 \right| \X\right) -(\X_0'\X_0)^{-1}\X_0'E\left( \left.  \boldsymbol \y_0\right| \X\right) \right)\right]- \tau\\
&=\frac{1}{n}E\left( {\bf1}_n \X \left[(\X_1'\X_1)^{-1}\X_1' \left(  \boldsymbol f^1(x)_{|z=1} +\rho_1 \sigma_1 {\boldsymbol \lambda_1}\right)\right]\right)\\
&\qquad\qquad\qquad\qquad-\frac{1}{n}E\left( {\bf1}_n \X \left[(\X_0'\X_0)^{-1}\X_0'\left(  \boldsymbol f^0(x)_{|z=0} +\rho_0 \sigma_0 {\boldsymbol \lambda_0}\right)\right]\right)- \tau\\
&=\mbox{bias}_C(\tau_{OR})+\frac{1}{n}E\left[ {\bf1}_n \X \left((\X_1'\X_1)^{-1}\X_1'  \boldsymbol f^1(x)_{|z=1}  \right) \right]\\
&\qquad \qquad \qquad \qquad -\frac{1}{n}E\left[ {\bf1}_n \X \left((\X_0'\X_0)^{-1}\X_0'  \boldsymbol f^0(x)_{|z=0} \right) \right]- \tau\\
&=\mbox{bias}_C(\tau_{OR})+\frac{1}{n}E\left[ {\bf1}_{n_0} \X_0 (\X_1'\X_1)^{-1}\X_1'  \boldsymbol f^1(x)_{|z=1} \right]- E\left((1-z) f^1(x)\right)\\
&\qquad \qquad \qquad \qquad + E\left(z f^0(x) \right) - \frac{1}{n}E\left[ {\bf1}_{n_1} \X_1 (\X_0'\X_0)^{-1}\X_0'  \boldsymbol f^0(x)_{|z=0} \right],
\end{align*}
where the last equality follow from: $\frac{1}{n}E\left[ {\bf1}_{n_0} \X_0 (\X_0'\X_0)^{-1}\X_0'  \boldsymbol f^0(x)_{|z=0} \right]=E((1-z)f^0(x))$ and $\frac{1}{n}E\left[ {\bf1}_{n_1} \X_1 (\X_1'\X_1)^{-1}\X_1'  \boldsymbol f^1(x)_{|z=1} \right]=E(zf^1(x))$.\qed

\subsection*{Proof Proposition \ref{Prop.DR}}
Under, Assumption \ref{Ass1b}, \ref{Ass2b}, \ref{Ass3}, and Regularity Assumption \ref{RegAss1} and \ref{RegAss2}: 
\begin{align*}
\limes \mbox{bias}_C(\hat \tau^1_{DR})&=\limes E(\hat\tau^1_{DR}) - \tau^1\\
&=\limes \left[\frac{1}{n_1}\sum  \limits_{i =1}^n z_i\left(y_i-\hat\beta^{0'}_{OLS}x_i\right)-\frac{1}{n_1}\sum  \limits_{i =1}^n(1-z_i)\frac{y_i-\hat\beta^{0'}_{OLS}x_i}{1-\hat p(x_i)} \right]- \tau^1\\
&=\limes \left[\frac{1}{n_1}\sum  \limits_{i =1}^n z_i\left(y_i-\hat\beta^{0'}_{OLS}x_i\right)-\frac{1}{n_1}\sum  \limits_{i =1}^n(1-z_i)\frac{y_i-\hat\beta^{0'}_{OLS}x_i}{1-p(x_i)} \right]- \tau^1\\
&=E(y^0 -\hat\beta^{0'}_{OLS} x \mid z=1)\\
&\qquad \qquad \qquad \qquad  - \frac{1}{E(z)}E_x\left[\left.E\left((1-z)(y^0-\hat\beta^{0'}_{OLS} x) \right| x \right)\frac{1}{1-p(x)} \right]  \\
&=E(y^0 -\hat\beta^{0'}_{OLS} x \mid z=1) \\
&\qquad \qquad \qquad \qquad - \frac{1}{E(z)}E_x\left[\left.E\left((y^0-\hat\beta^{0'}_{OLS} x) \right| x, z=0 \right)\frac{p(z=0|x)}{1-p(x)} \right]  \\
&=E(y^0 -\hat\beta^{0'}_{OLS} x \mid z=1) - \frac{1}{E(z)}E_x\left[\left.E\left(y^0\right| x, z=0 \right) -\hat\beta^{0'}_{OLS} x \right]  \\
&=E(y^0 -\hat\beta^{0'}_{OLS} x \mid z=1) - \frac{1}{E(z)}E_x\left[E(y^0 | x) - \rho_0\sigma_0\lambda_0(g(x; \gamma))- \hat\beta^{0'}_{OLS}  x \right]  \\
&=E(y^0 -\hat\beta^{0'}_{OLS} x \mid z=1) - \frac{E(y^0 ) }{E(z)} +\frac{\rho_0\sigma_0\lambda_0(g(x; \gamma))}{E(z)}+\frac{E\left(\hat\beta^{0'}_{OLS} x\right) }{E(z)} \\
&=E(y^0 \mid z=1) -\frac{E(z\hat\beta^{0'}_{OLS} x)}{E(z)}- \frac{E(z y^0 ) +E((1-z) y^0 ) }{E(z)} \\
&\qquad \qquad \qquad \qquad +\frac{\rho_0\sigma_0\lambda_0(g(x; \gamma))}{E(z)}+\frac{E\left( \hat\beta^{0'}_{OLS}x \right) }{E(z)} \\
&=\frac{\rho_0\sigma_0E(\lambda_0(g(x; \gamma)))}{E(z)},
\end{align*}
using Lemma 2 and the fact that $\left.E\left(\hat\beta^{0'}_{OLS} \right| x, z=0 \right)= \hat\beta^{0'}_{OLS}= \left.E\left(\hat\beta^{0'}_{OLS} \right| x\right)
$, $E(z y^0)=E(y^0|z)E(z)$, and $E((1-z) y^0 ) =E((1-z)  \hat\beta^{0'}_{OLS}x)$.\qed

\vspace{5mm}
\noindent Under, Assumption \ref{Ass1b}, \ref{Ass2b}, \ref{Ass3} and Regularity Assumption \ref{RegAss2}: 
\begin{align*}
\limes \mbox{bias}_C(\hat \tau_{DR})&=\limes \left[E(\hat\tau_{DR}) \right]- \tau= \limes\left[ \mu_1-\mu_0 \right]- \tau\\ &=\rho_1\sigma_1\lambda_1(g(x; \gamma))+\rho_0\sigma_0\lambda_0(g(x; \gamma)),
\end{align*}
where
\begin{align*}
\limes \mu_1&=\limes \left[E(\hat\beta^{1'}_{OLS}x) +E\left(\frac{z(y^1-\hat\beta^{1'}_{OLS}x)}{\hat p(x)}\right)\right] \\
&= E(\hat\beta^{1'}_{OLS}x) +E\left[ E\left(\left.\frac{z(y^1-\hat\beta^{1'}_{OLS}x)}{p(x)}\right| x \right)\right] \\
&=E(\hat\beta^{1'}_{OLS}x) +E\left[\frac{1}{p(x)} E\left(\left.z(y^1-\hat\beta^{1'}_{OLS}x)\right| x \right)\right] \\
&=E(\hat\beta^{1'}_{OLS}x) +E\left[\frac{1}{p(x)} E\left(\left.y^1-\hat\beta^{1'}_{OLS}x\right| x, z=1 \right)\Pr(z=1|x)\right] \\
&=E(\hat\beta^{1'}_{OLS}x) +E\left[E\left(\left.y^1\right| x, z=1 \right) -\hat\beta^{1'}_{OLS}x\right] \\
&=E(\hat\beta^{1'}_{OLS}x) +E\left[E\left(\left.y^1\right| x \right)+\rho_1 \sigma_1 \lambda_1(g(x; \gamma)) \right] -E(\hat\beta^{1'}_{OLS}x)\\
&=E\left(f^1(x) \right)+\rho_1 \sigma_1 E\left(\lambda_1(g(x; \gamma)) \right),
\end{align*}
and
\begin{align*}
\limes \mu_0&=\limes\left[E(\hat\beta^{0'}_{OLS}x) +E\left(\frac{(1-z)(y^0-\hat\beta^{0'}_{OLS}x)}{1-\hat p(x)}\right)\right] \\
&= E(\hat\beta^{0'}_{OLS}x) +E\left[ E\left(\left.\frac{(1-z)(y^0-\hat\beta^{0'}_{OLS}x)}{1-p(x)}\right| x \right)\right]\\
&=E(\hat\beta^{0'}_{OLS}x) +E\left[\frac{1}{1-p(x)} E\left(\left.(1-z)(y^0-\hat\beta^{0'}_{OLS}x)\right| x \right)\right] \\
&=E(\hat\beta^{0'}_{OLS}x) +E\left[\frac{1}{1-p(x)} E\left(\left.y^0-\hat\beta^{0'}_{OLS}x\right| x, z=0 \right)\Pr(z=0|x)\right] \\
&=E(\hat\beta^{0'}_{OLS}x) +E\left[E\left(\left.y^0\right| x, z=0 \right) -\hat\beta^{0'}_{OLS}x\right] \\
&=E(\hat\beta^{0'}_{OLS}x) +E\left[E\left(\left.y^0\right| x \right)-\rho_0 \sigma_0 \lambda_0(g(x; \gamma)) \right] -E(\hat\beta^{0'}_{OLS}x)\\
&=E\left(f^0(x) \right)-\rho_0 \sigma_0 E\left(\lambda_0(g(x; \gamma)) \right). 
\end{align*} \qed
\section*{Appendix C}
\subsection*{Variance of the outcome regression estimator}
The variance of the outcome regression estimator can be estimated by either the large sample variance or by the sandwich estimator since it is an m-estimator, see for instance \cite{Stefanski:2002} for details. In the simulations performed in this paper under Assumption \ref{Ass4}, both estimators have worked well. However, the sandwich estimator also performed well when Assumption \ref{Ass4} was not fulfilled (as expected), and hence it is the one we recommend and use for the simulations and the real data example. Also, the large sample variance for $\hat\tau_{OR}^1$ given below require the estimation of $\bbeta^1$ and $E\left(\mbox{Var}(y_i|x_i, z=1)\right)$, i.e. an additional model only used in the variance estimation, which we do not need for the sandwich estimator. 
\subsubsection*{Sandwich estimator}

\noindent Under Assumption \ref{Ass1} the variance of $\hat \tau_{OR}^1$ can be estimated by:
$$\widehat{\Var}(\hat{\tau}^1_{OR}) = (A_n^{-1} B_n (A_n^{-1})' )_{(1,1)}/ n$$
where (1,1) stand for the element on row 1 and column 1 and:

\begin{equation*}
A_n=\frac{1}{n}\sum \limits_{i =1}^n\left[
\begin{matrix}
z_i& z_i x_i  \\
\boldsymbol 0_{(p+1)\times1}&(1-z)x' x \end{matrix} \right], \quad B_n=\frac{1}{n}\sum \limits_{i =1}^n\Psi (y_i,  \boldsymbol{ \hat{\theta}})\Psi (y_i,  \boldsymbol{ \hat{\theta}})',
\end{equation*}
and
\begin{equation*}
\Psi (y_i,  \boldsymbol{ \hat{\theta}}) =\left[
\begin{matrix}
z_i(y_i- \hat \beta^0_{OLS}x_i -\tau^1) \\
(1-z_i)(y_i- \hat \beta^0_{OLS}x_i) x_i' 
\end{matrix} \right].
\end{equation*}
\vspace{2mm}

\noindent Under Assumption \ref{Ass1} and \ref{Ass2} the variance of $\hat \tau_{OR}$ can be estimated by:
$$\widehat{\Var}(\hat{\tau}_{OR}) = (A_n^{-1} B_n (A_n^{-1})' )_{(1,1)}/ n$$
where (1,1) stand for the element on row 1 and column 1 and:

\begin{equation*}
A_n=\frac{1}{n}\sum \limits_{i =1}^n\left[
\begin{matrix}
1& -x_i &x_i \\
\boldsymbol 0_{(p+1)\times1}&zx' x& \boldsymbol 0_{(p+1)\times(p+1)}\\
\boldsymbol 0_{(p+1)\times1}&\boldsymbol 0_{(p+1)\times(p+1)}& (1-z)x' x
 \end{matrix} \right],
\end{equation*}
\begin{equation*}
B_n=\frac{1}{n}\sum \limits_{i =1}^n\Psi (y_i,  \boldsymbol{ \hat{\theta}})\Psi (y_i,  \boldsymbol{ \hat{\theta}})',
\end{equation*}
and
\begin{equation*}
\Psi (y_i,  \boldsymbol{ \hat{\theta}}) =\left[
\begin{matrix}
(\hat \beta^1_{OLS}x_i- \hat \beta^0_{OLS}x_i) - \tau \\
z_i(y_i- \hat \beta^1_{OLS}x_i) x_i' \\
(1-z_i)(y_i- \hat \beta^0_{OLS}x_i) x_i' 
\end{matrix} \right].
\end{equation*}
\vspace{2mm}

\subsubsection*{Large sample variance}
The large sample variance of $\hat \tau_{OR}^1$ is given by:
$$\mbox{Var}(\hat \tau_{OR}^1) \simeq n_1^{-1} \left[2 \cdot E\left(\mbox{Var}(y_i|x_i, z=1)\right) +(\beta^1-\beta^0)\mbox{Cov}\left(x_i|z_i=1\right)(\beta^1-\beta^0)' \right],$$
which under Assumption \ref{Ass1}, \ref{Ass2} and \ref{Ass4} can be estimated by
$$\widehat\Var(\hat \tau_{OR}^1) \simeq n_1^{-1} \left[2 \left(\frac{\sum \limits_{i \in \mathcal{I}_1}e_i^2}{n_1-1} \right) +(\hat \beta^1-\hat \beta^0)\widehat{\Cov}\left(x_i|z_i=1\right)(\hat \beta^1-\hat \beta^0)' \right],$$
where $e_i$ and $\hat \beta^1$ are the residuals and coefficients estimates from the OLS regression of $\X_1$ on $\y_1$, and similarly $\hat \beta^0$ are the coefficients estimates from the OLS regression of $\X_0$ on $\y_0$.
 \begin{proof}
 Under Assumption \ref{Ass1} and \ref{Ass4a}:
\begin{align*}
\mbox{Var}(\hat \tau_{OR}^1) &\simeq \mbox{Var}(\hat \tau_{OR}^1|\z)\\
&=E\left.\left[\mbox{Var}\left(\frac{1}{n_1}\sum \limits_{i =1}^n\left(z_i( y_i -\hat\beta^{0'}_{OLS}x_i) \right)\right| x,  \z \right) \right] + \mbox{Var}\left[E\left(\hat \tau_{OR}^1 | x, \z \right) \right]\\
&=n_1^{-2} \left[ \sum \limits_{i =1}^n z_i \left( E\left(\mbox{Var}(y_i|x_i, z=1)\right) \right) +\mbox{Var}\left((\beta^1-\beta^0)\sum \limits_{i =1}^nz_ix_i\right)\right] \\ 
&+n_1^{-2} \left[ E\left( \sum \limits_{i =1}^n\sum \limits_{j =1}^n z_i z_j\mbox{Cov}(\hat\beta^{0'}_{OLS}x_i,\hat\beta^{0'}_{OLS}x_j | x, \z)\right)  \right] \\
&=n_1^{-1} \left[ E\left(\mbox{Var}(y_i|x_i, z=1)\right) +(\beta^1-\beta^0)\mbox{Cov}\left(x_i|z_i=1\right)(\beta^1-\beta^0)' \right]\\
&\qquad \qquad\qquad \qquad \qquad + n_1^{-2} E \left( \boldsymbol{1}_{1\times n_1}\X_1\mbox{Cov}(\hat\beta^{0}_{OLS}| x, \z) \X_1'\boldsymbol{1}_{n_1\times1}\right)\\
&=n_1^{-1} \left[2 \cdot E\left(\mbox{Var}(y_i|x_i, z=1)\right) +(\beta^1-\beta^0)\mbox{Cov}\left(x_i|z_i=1\right)(\beta^1-\beta^0)' \right],
\end{align*}
since
\begin{align*}
\sum \limits_{i =1}^n\sum \limits_{j =1}^n z_i z_j\mbox{Cov}(\hat\beta^{0'}_{OLS}x_i,\hat\beta^{0'}_{OLS}x_j | x, \z)  &= 
\sum \limits_{i =1}^n\sum \limits_{j =1}^n z_i z_j x_i'\mbox{Cov}(\hat\beta^{0}_{OLS}| x, \z) x_j\\
&=\boldsymbol{1}_{1\times n_1}\X_1\mbox{Cov}(\hat\beta^{0}_{OLS}| x, \z) \X_1'\boldsymbol{1}_{n_1\times1}\\
&=\boldsymbol{1}_{1\times n_1}\X_1\left[\mbox{Var}(y_i|x_i, z=1) \left(\X_1' \X_1\right)^{-1} \right] \X_1'\boldsymbol{1}_{n_1\times1}\\
&=n_1 \mbox{Var}(y_i|x_i, z=1).
\end{align*}
\end{proof}

\noindent The large sample variance of $\hat \tau_{OR}$ is given by:
\begin{align*}
\mbox{Var}(\hat \tau_{OR})&\simeq  n^{-2} \left(   E\left[ \boldsymbol{1}_{1\times n}\X \, \left[ \mbox{Cov}\left( \left. \hat\beta^{1}_{OLS} \right|  \X,\z \right) + \mbox{Cov}\left( \left. \hat\beta^{0}_{OLS} \right|  \X,\z \right)\right] \X' \, \boldsymbol{1}_{n\times 1} \right] \right)\\
&\qquad \qquad \qquad \qquad + n^{-1}(\beta^1-\beta^0)\mbox{Cov}(x_i)(\beta^1-\beta^0)',
\end{align*}
which under Assumption \ref{Ass1}, \ref{Ass2} and \ref{Ass4} can be estimated by:
\begin{align*}
\widehat{\Var}(\hat \tau_{OR})& =  n^{-2} \left(  \boldsymbol{1}_{1\times n}\X \, \left[\left(\frac{\sum \limits_{i \in \mathcal{I}_1}e_i^2}{n_1-1} \right)\X_1' \X_1 + \left(\frac{\sum \limits_{i \in \mathcal{I}_0}e_i^2}{n_0-1} \right)\X_0' \X_0\right] \X' \, \boldsymbol{1}_{n\times 1} \right)\\
&\qquad \qquad \qquad  + n^{-1}(\hat\beta^1-\hat\beta^0)\widehat\Cov(x_i)(\hat\beta^1-\hat\beta^0)'.
\end{align*}
\begin{proof}
\noindent Under Assumption \ref{Ass1} , \ref{Ass2}  and \ref{Ass4}:
\begin{align*}
\mbox{Var}(\hat \tau_{OR})&\simeq \mbox{Var}(\hat \tau_{OR}|\z)\\
&=E\left[ \mbox{Var}\left( \left. \frac{1}{n}\sum \limits_{i =1}^n\left( \hat\beta^{1'}_{OLS}x_i-\hat\beta^{0'}_{OLS}x_i\right) \right|  \X, \z \right) \right]\\
&\qquad \qquad \qquad \qquad + \mbox{Var}\left[ E\left( \left. \frac{1}{n}\sum \limits_{i =1}^n\left( \hat\beta^{1'}_{OLS}x_i-\hat\beta^{0'}_{OLS}x_i\right) \right|  \X, \z \right) \right] \\
&= n^{-2} \left(   E\left[ \mbox{Var}\left( \left.\sum \limits_{i =1}^n (\hat\beta^{1'}_{OLS}-\hat\beta^{0'}_{OLS})x_i \right|  \X, \z\right)   \right] \right) \\
&\qquad \qquad \qquad \qquad +n^{-1}(\beta^1-\beta^0)\mbox{Var}(x_i) (\beta^1-\beta^0)'  \\
&= n^{-2} \left(   E\left[ \boldsymbol{1}_{1\times n}\X \, \left[ \mbox{Cov}\left( \left. \hat\beta^{1}_{OLS} \right|  \X,\z \right) + \mbox{Cov}\left( \left. \hat\beta^{0}_{OLS} \right|  \X,\z \right)\right] \X' \, \boldsymbol{1}_{n\times 1} \right] \right)\\
&\qquad \qquad \qquad \qquad + n^{-1}(\beta^1-\beta^0)\mbox{Cov}(x_i)(\beta^1-\beta^0)'.
\end{align*}
\end{proof}

\subsection*{Variance of the double robust estimator}
To caluculate the variance of the doubly robust estimators we use the sandwich estimator since they both are m-estimators, see \cite{Stefanski:2002} for more details.

Under Assumption \ref{Ass1} the variance of $\hat \tau_{DR}^1$ can be estimated by:
$$\widehat{\Var}(\hat{\tau}^1_{DR}) = (A_n^{-1} B_n (A_n^{-1})' )_{(1,1)}/ n$$
where (1,1) stand for the element on row 1 and column 1 and:

\begin{equation*}
A_n=\frac{1}{n}\sum \limits_{i =1}^n\left[
\begin{matrix}
z_i& \left(z_i - \frac{(1-z_i)}{\Phi(\hat \bgamma x_i)}\right) x_i & (1-z_i)(y_i- \hat \beta^0_{OLS}x_i) \frac{\phi(\hat \bgamma x_i)}{\Phi(\hat \bgamma x_i)^2}x \\
\boldsymbol 0_{(p+1)\times1}&(1-z)x' x &\boldsymbol 0_{(p+1)\times(p+1)}\\
\boldsymbol 0_{(p+1)\times1}&\boldsymbol 0_{(p+1)\times(p+1)}& - \frac{d \Psi_3}{d \bgamma}
\end{matrix} \right],
\end{equation*}
\begin{equation*}
B_n=\frac{1}{n}\sum \limits_{i =1}^n\Psi (y_i,  \boldsymbol{ \hat{\theta}})\Psi (y_i,  \boldsymbol{ \hat{\theta}})',
\end{equation*}
and
\begin{equation*}
\Psi (y_i,  \boldsymbol{ \hat{\theta}}) =\left[
\begin{matrix}
\Psi_1  \\
\Psi_2  \\
\Psi_3 
\end{matrix} \right]=\left[
\begin{matrix}
z_i(y_i- \hat \beta^0_{OLS}x_i) - (1-z_i)\frac{y_i- \hat \beta^0_{OLS}x_i}{1-\Phi(\hat \bgamma x_i)} - z_i \tau^1 \\
(1-z_i)(y_i- \hat \beta^0_{OLS}x_i) x_i' \\
(z_i\frac{\phi(\hat \bgamma x_i)}{\Phi(\hat \bgamma x_i)}-(1-z)\frac{\phi(\hat \bgamma x_i)}{1-\Phi(\hat \bgamma x_i)})x_i'
\end{matrix} \right].
\end{equation*}
\vspace{2mm}

\noindent If Assumption \ref{Ass4} also holds some of the terms will converge to zero, i.e. the variance expression can be simplified for large samples. However, in the simulations performed for this paper, the simplified estimator have poor precision compared with the estimator given above. For the estimation of the standard error of $\hat \tau_{DR}$, however, a simplified version of the sandwich estimator performs rather good, hence this is what is proposed below.

 Under Assumption \ref{Ass1}, \ref{Ass2}, \ref{Ass4} and a correctly specified propensity score, the standard error of $\hat \tau_{DR}$ can be estimated by: (\citealp{Lunceford:2004})
$$\widehat{\Var}(\hat{\tau}_{DR}) = n^{-2} \sum \limits_{i =1}^n\hat{I}_i^2$$
\begin{displaymath}
\hat{I}_i= \hat\beta^{1'}_{OLS}x_i-\hat\beta^{0'}_{OLS}x_i
+z_i\, \frac{y_i-\hat\beta^{1'}_{OLS}x_i}{\hat{p}(x_i)}
-(1-z_i)\frac{y_i-\hat\beta^{0'}_{OLS}x_i}{1-\hat{p}(x_i)} -\hat{\tau}_{DR}.
\end{displaymath}

\section*{Appendix D}
\begin{figure} [htbp]
\centering
\begin{subfigure}[b]{\textwidth}
\includegraphics[width=\textwidth]{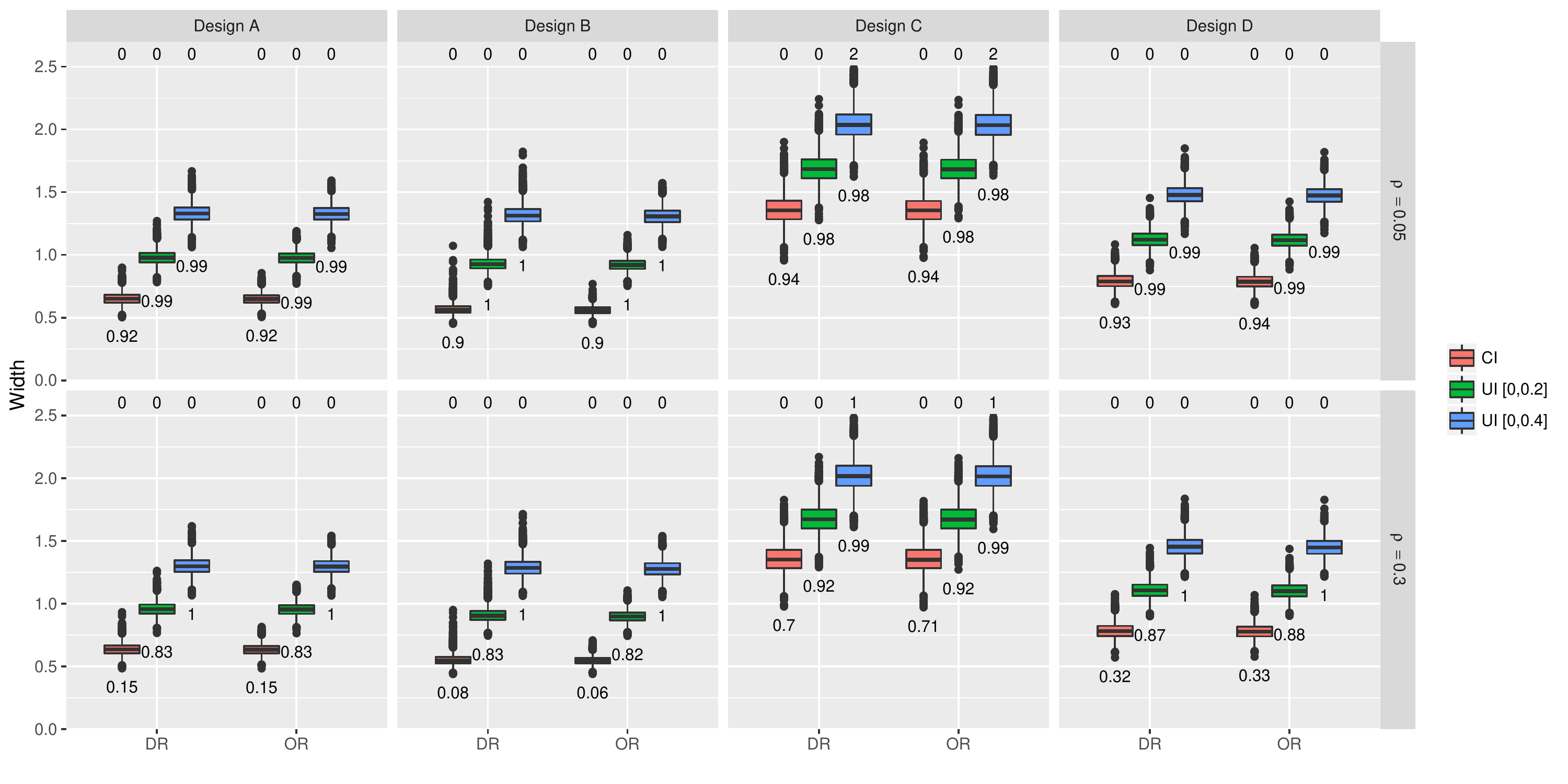}
\caption{Treatment assignment with low L1.}
\label{t1.p1.250.fig}
\end{subfigure}
\begin{subfigure}[b]{\textwidth}
\includegraphics[width=\textwidth]{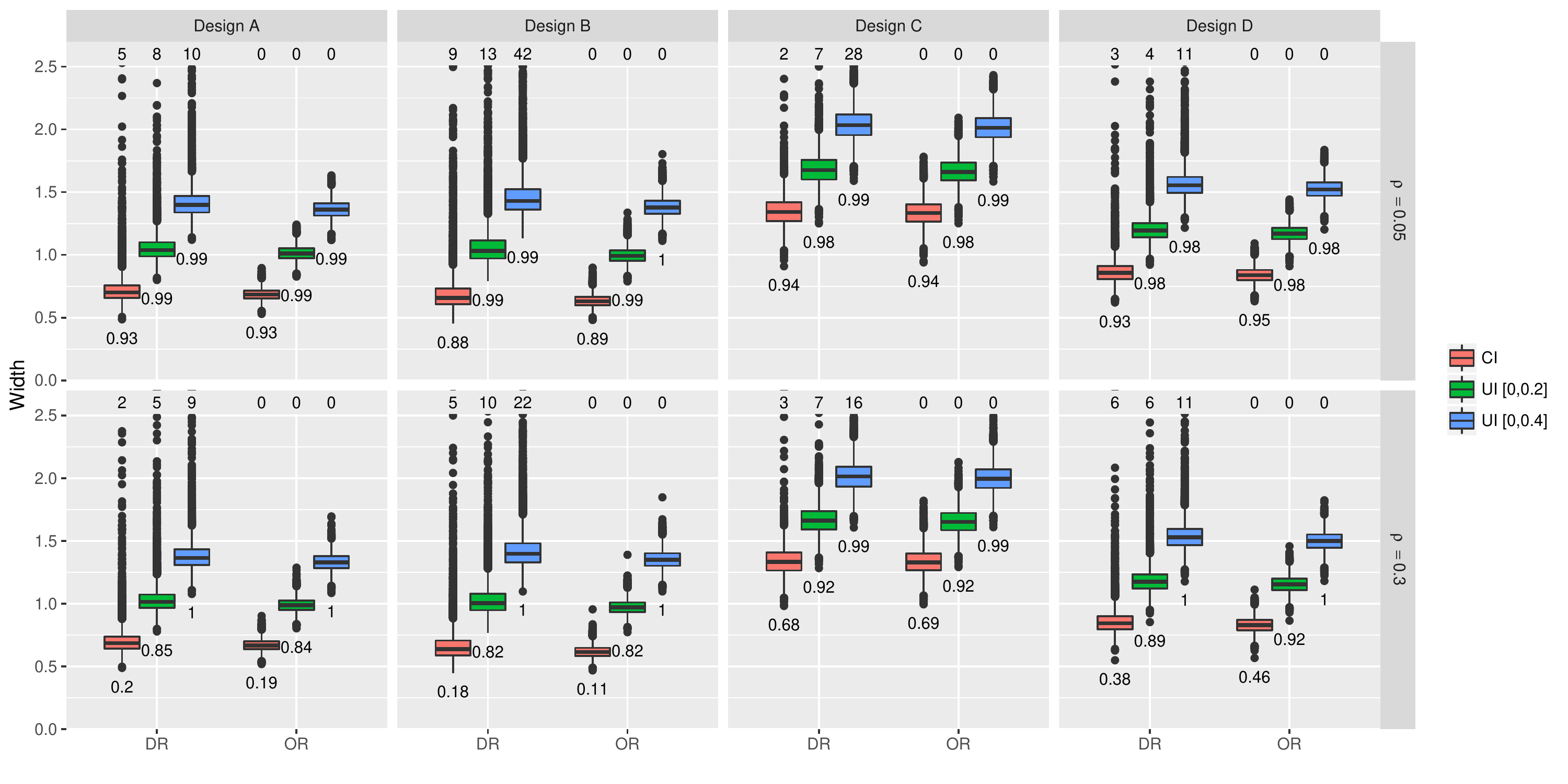}
\caption{Treatment assignment with high L1.}
\label{t1.p2.250.fig}
\end{subfigure}
\caption{Boxplot of the width of two 95\% uncertainty intervals (assuming $\rho_j \in [0,0.2]$, green, and $\rho_j \in [0,0.4]$, blue, $j=0,1$) and the 95\% confidence interval, red, for the doubly robust (DR) and outcome regression (OR) estimator of $\tau^1$ under design A-D for $\rho_0=\rho_1=0.05$ and $0.3$, with sample size 250. The empirical coverage of each interval is written below each boxplot and the number of outliers that lie outside the window is written at the top of the window above each boxplot.}
\label{t1.250.fig}
\end{figure}

\begin{figure} [htbp]
\centering
\begin{subfigure}[b]{\textwidth}
\includegraphics[width=\textwidth]{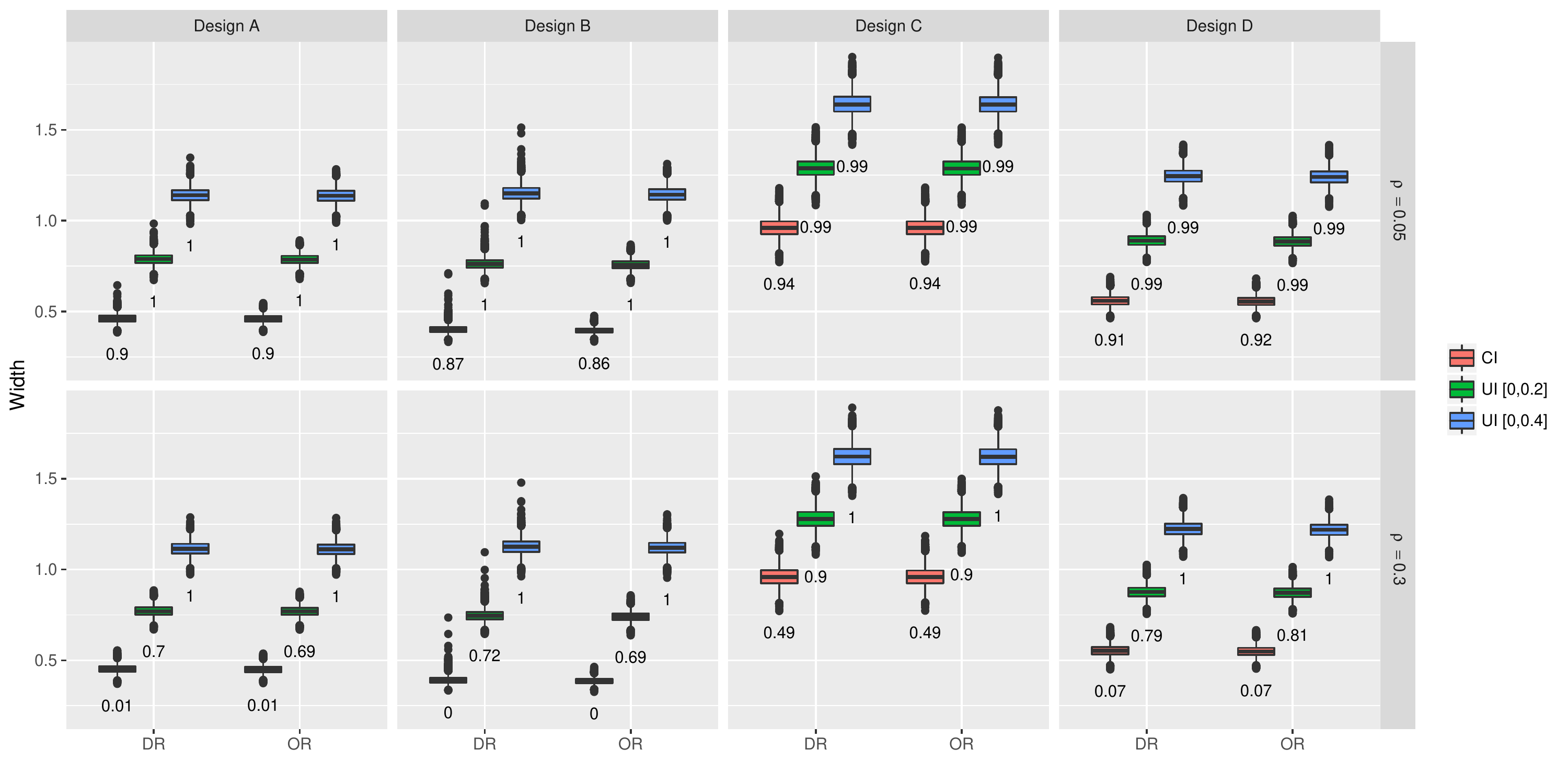}
\caption{Treatment assignment with low L1.}
\label{t1.p1.500.fig}
\end{subfigure}
\begin{subfigure}[b]{\textwidth}
\includegraphics[width=\textwidth]{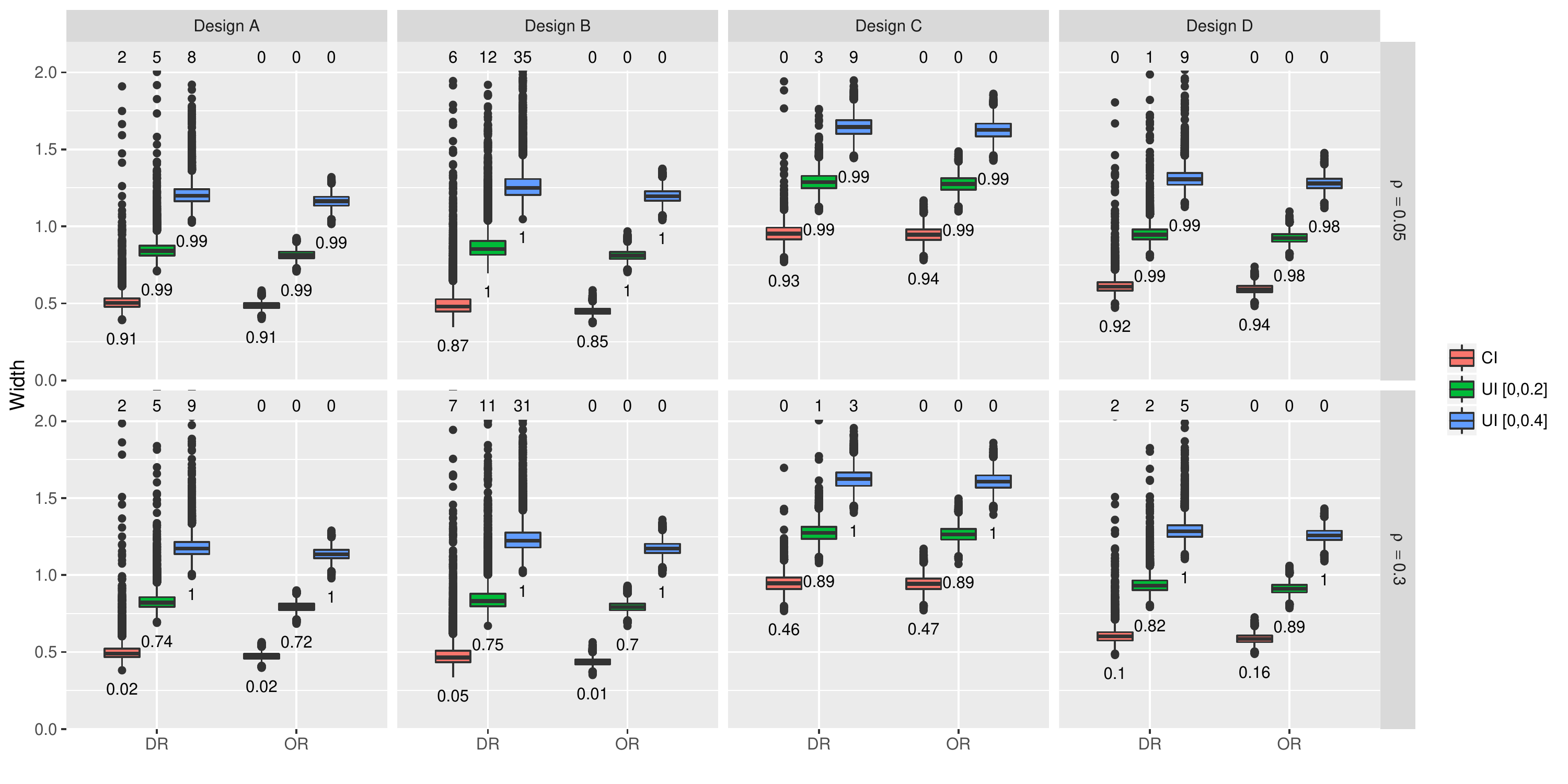}
\caption{Treatment assignment with high L1.}
\label{t1.p2.500.fig}
\end{subfigure}
\caption{Boxplot of the width of two 95\% uncertainty intervals (assuming $\rho_j \in [0,0.2]$, green, and $\rho_j \in [0,0.4]$, blue, $j=0,1$) and the 95\% confidence interval, red, for the doubly robust (DR) and outcome regression (OR) estimator of $\tau^1$ under design A-D for $\rho_0=\rho_1=0.05$ and $0.3$, with sample size 500. The empirical coverage of each interval is written below each boxplot and the number of outliers that lie outside the window is written at the top of the window above each boxplot.}
\label{t1.500.fig}
\end{figure}

\end{document}